\newtheorem{lemma}{Lemma}
\newtheorem{theorem}[lemma]{Theorem}
\newtheorem{proof}[lemma]{Proof}
\title{Existence of a Supersymmetric Massless Ground State of the $SU(N)$ Matrix Model globally on its Valleys }
\author{Lyonell Boulton$^1$}
\affiliation{$^1$Maxwell Institute for Mathematical Sciences and Department of Mathematics Heriot-Watt University, Edinburgh, EH14 4AS, United Kingdom.}
\emailAdd{l.boulton@hw.ac.uk}
\author{Mar{\'\i}a Pilar Garc{\'\i}a del Moral$^2$}
\affiliation{$^{2,3}$Departamento de F\'{\i}sica, Universidad de Antofagasta, Aptdo 02800, Chile.}
\emailAdd{maria.garciadelmoral@uantof.cl, alvaro.restuccia@uantof.cl}
\author{Alvaro Restuccia$^3$}
\date{06-01-2021}
\abstract{
In this work we consider the existence and uniqueness of the ground state of the regularized Hamiltonian of the Supermembrane in dimensions $D= 4,\,5,\,7$ and 11, or equivalently the $SU(N)$ Matrix Model. That is, the 0+1 reduction of the 10-dimensional $SU(N)$ Super Yang-Mills Hamiltonian. This ground state problem is associated with the solutions of the inner and outer Dirichlet problems for this operator, and their subsequent smooth patching (glueing)  into a single state. We have discussed properties of the inner problem in a previous work, therefore we now investigate the outer Dirichlet problem for the Hamiltonian operator. We establish existence and uniqueness  on unbounded valleys defined in terms of the bosonic potential. These are precisely those regions where the bosonic part of the potential is less than a given value $V_0$, which we set to be arbitrary. The problem is well posed, since these valleys are preserved by the action of the $SU(N)$ constraint. We first show that their Lebesgue measure is finite, subject to restrictions on $D$ in terms of $N$. We then use this analysis to determine a bound on the fermionic potential which yields the coercive property of the energy form. It is from this, that we derive the existence and uniqueness of the solution. As a by-product of our argumentation, we show that the Hamiltonian, restricted to the valleys, has spectrum purely discrete with finite multiplicity. Remarkably, this is in contrast to the case of the unrestricted space, where it is well known that the spectrum comprises a continuous segment. We discuss the relation of our work with the general ground state problem and the question of confinement in models with strong interactions.
}
\begin{document}

\maketitle
%\newpage
\flushbottom
\section{Introduction} \label{1}
The present paper is devoted to the ground state of the supersymmetric Hamiltonian related to three theories: the regularization of the $D=11$ Supermembrane Theory \cite{dwhn}, the BFSS Matrix Model \cite{bfss} and  the reduction of $D=10$ Super Yang-Mills to $0+1$ dimensions \cite{claudson}. The existence of a massless ground state in any of these three instances  is an open problem.

 The relevance of this regime for (Super)Yang-Mills theories in the infra-red (IR) limit, the so-called slow-mode regime, was highlighted in \cite{gabadadze}. In this IR limit, glueball bound states and flux tubes between quarks are expected to be formed. It has been suggested that they  can be described in terms of confining strings which corresponds to a Nambu-Goto string connecting pairs of quarks at the extremes, subject to several corrections \cite{PS, Polyakov, LW}. See for example \cite{aharony, Brandt, Solberg} or  \cite{Teper, Gliozzi} in the context of lattice QCD. In these works, the confining  strings acquire a width and they have also been modelled out in terms of D2-brane bound states \cite{supertubes}. (Super) membranes are strongly coupled  $2+1$ dimensional objects and their regularized description corresponds to the Matrix Models mentioned above.  It has also been suggested that membranes can be seen as the  IR limit  of Yang Mills theories \cite{Lechtenfeld}  and they have also been used in the literature to describe some aspects of QCD \cite{Ansoldi}. Indeed, in  \cite{inertia} it was shown that the spectrum of the bosonic regularized membrane theory has a mass gap given by the inertia moment and as it is well known the spectrum in this regime is purely discrete for arbitrary $N$ (in particular for $N=3$). The existence of bound states of these theories, realised as eigenvalues embedded in the continuous spectrum, could be an interesting research direction.

Our present goal is to describe the theories, restricted to certain $8\times(N-1)$-dimensional regions determined by the flat directions of the potential. Adhering to the standard terminology, we will call ``valleys'' of the bosonic potential, those points of the space such that the potential is bounded above by a given positive constant $V_0$. That is, they are determined by imposing a given constraint on the ``height'' of the bosonic potential term.  These valleys extend to infinity with decreasing width. Constrained along these valleys, we obtain a $10D$  Super Yang Mills theory in the slow mode regime or the regularized $SU(N)$ supermembrane. \emph{I.e.} the $SU(N)$ Matrix Model confined to a star-shape tubular region. The corresponding Hamiltonian operator has a domain determined by the space of wave functions supported on the valleys, with vanishing boundary condition. Our analysis covers arbitrary rank of the $SU(N)$ gauge group, in particular for $N=2$ and $N=3$, both cases being of interest. 

The Supermembrane Theory was developed in \cite{bst}. The corresponding $SU(N)$ regularization was introduced in \cite{hoppe} and in \cite{dwhn,dwmn} the $SU(N)$ regularized Hamiltonian in the light cone gauge was obtained. The zero mode eigenfunction can be described in terms of the $D=11$ supergravity multiplet, however, the existence of the ground state of the Hamiltonian requires a proof of existence of a unique nontrivial eigenfunction for the nonzero modes. To the best of our knowledge, no complete proof of this fact has been found to this date. Moreover, in order to be identified with the $D=11$ supergravity multiplet, it has to satisfy the additional constraint of being invariant under $SO(9)$.  The existence and uniqueness of the ground state has also been analyzed from different perspectives. One of these started with \cite{dwhn}. Although the problem remains open several interesting contributions to it have been obtained  \cite{dwhn,hoppe,hasler,fh,michishita,hl,hlt,frolich}. We follow this perspective and prove, for a well-defined region around the valleys of the potential extended to infinity, the existence and uniqueness of the nontrivial state annihilating the Hamiltonian. 
Another approach to the problem has been to consider the Witten index. This index is well established for elliptical operators of the Fredholm type in Supersymmetric Quantum Mechanics.  This is not the case with the Hamiltonian \cite{dwhn,bfss}. In \cite{Sethi-Stern01,Sethi-Stern02} an extension of the Witten index for non-Fredholm operators was introduced. Although there are still unsolved problems in the approach, it has been claimed by the authors the existence of a ground state for the $SU(2)$ model and some other extensions. In \cite{Staudacher} contradictions of the approach for exceptional groups were reported. The Witten index was also used in the analysis performed in \cite{Yi}. The index approach, while interesting, does not characterize the ground state wave function beyond belonging to a Hilbert $L_2$ space. In our work, as we said, we follow a completely different approach in the hope of obtaining the existence and uniqueness of the ground state together with bounds for the wave function characterizing the ground state behavior.

In $D=11$ Supermembrane Theory, the zero modes associated with the center of mass and the non-zero modes associated with the internal excitations, decouple. The ground state of the Hamiltonian with zero eigenvalue can be described in terms of  the $D=11$ supergravity multiplet, once the existence of the non-zero modes of a unique nontrivial eigenfunction (with zero eigenvalue) invariant under the R-symmetry $SO(9)$ is proven, \cite{dwhn}. The $SU(N)$ regularized Hamiltonian for nonzero modes coincides with the Hamiltonian of the BFSS Matrix Model, \cite{bfss}. This Hamiltonian was first obtained as the $0+1$ reduction of $10D$ Super Yang Mills \cite{claudson,halpern}. 

In order to find the ground state of the Hamiltonian, we propose three main steps, already sketched in \cite{bgmrGS,bgmrExt,bgmrSU2, bgmrO}. Firstly, determine the existence and uniqueness of the solution to the Dirichlet problem on a bounded region $\Omega$ with smooth boundary $\partial\Omega$. Secondly, determine the existence and uniqueness of the solution to the Dirichlet problem  on the complementary unbounded region. Thirdly, establish the smooth patching along $\partial \Omega$ of both these solutions into a single state which, by construction, is the full ground state. 

We developed the first step in \cite{bgmrGS} for $\Omega$ of arbitrary diameter. Our argument relied on the polynomial form of the bosonic and fermionic potentials, as well as on the supersymmetric structure of the Hamiltonian. In the full space, although the Hamiltonian is positive, the potential becomes negative and arbitrarily large in modulus along certain directions inside the valleys extending to infinity. However, on bounded regions, the potential is bounded (both above and below). Therefore, the Dirichlet form associated to the Hamiltonian restricted to $\Omega$ is coercive. Moreover, the supersymmetric structure of the Hamiltonian together with other analytic properties of potential, imply that a state $\varphi$ constrained to cancellation by the supersymmetric charges in $\Omega$ and satisfying the homogeneous Dirichlet condition $\varphi=0$ on $\partial\Omega$, can only be the null state of the Hilbert space. In turns, the existence and uniqueness of the Dirichlet problem on $\Omega$, follows by standard arguments from the theory of elliptic operators. These involve the Rellich-Kondrashov Compact Embedding Theorem, the Lax-Milgram Theorem and the Fredholm Theorem. See  \cite{bgmrGS} for more specific details. 

The Rellich-Kondrashov Compact Embedding Theorem is valid for every bounded region of the Euclidean space, but unfortunately might fail on unbounded regions. Nonetheless, according to arguments in \cite{Lundholm}, the convergence of the partition function of $0+1$ Yang-Mills \cite{austing1,austing2} is related to the fact that the Lebesgue measure of the bosonic valleys is finite. Following the former work, this property is valid for all algebras considered in \cite{austing1} and \cite{austing2}.

By pursuing an alternative approach to that of  \cite{Lundholm,austing1,austing2}, we established in \cite{bgmrGS} a concrete estimate for the Lebesgue measure of these bosonic valleys for the $SU(2)$ algebra. We then showed that the embedding of $H^1$ into $L^2$ is compact. Hence the Rellich-Kondrashov Theorem is, once again, valid for these regions. Our argumentation was intrinsic to the specific structure of the bosonic potential. From the estimate and a relevant bound for the fermionic potential, it should follow that the Dirichlet form of the complete Hamiltonian is coercive.

One of our main purposes below will be to extend this idea onto an $SU(N)$ algebra and prove the existence and uniqueness of the state, in the physical subspace, annihilated by the Hamiltonian of the $SU(N)$ $D=11$ Supermembrane, restricted to the valleys. Our construction is based on specific properties of the potential, one of the most important being an ellipsoidal symmetry along axial directions. The results, concerning the finiteness of the measure of the valleys, for any $N\ge 2$  agree precisely with the ones already reported in \cite{Lundholm,austing1,austing2}. Additionally, for the $SU(N)$ model we derive an explicit bound for the fermionic potential, proposed in [29] for the $SU(2)$ algebra, which characterizes the wavefunctions of the corresponding Hilbert space. They depend on other properties of the potential, outside a neighbourhood of the origin. We then show that the Dirichlet form associated to the $SU(N)$ Hamiltonian  of the Supermembrane is coercive. From this, and the compact embedding $H^1\subset L^2$ on the valleys, it follows that  the solution of the Dirichlet problem exists and is unique. An important result that follows from our analysis is that the Hamiltonian restricted to the valleys has discrete spectrum with finite multiplicity.

Formally, the potential is dominated by its bosonic component in the directions ``away'' from the valleys, so the wave function is confined along those directions. Although a rigorous proof of the latter for any curve reaching infinity is not currently available, it is clear that the analysis of the Dirichlet problem  along the valleys that we currently conduct, sheds an important light on the direction to follow in the second step of the program mentioned above. 

The remaining of the paper is structured as follows. In Section~\ref{2} we set the scenary by recalling the $0+1$ Matrix Model formulation of the $11D$ Supermembrane Theory in the Light Cone Gauge. In Section~\ref{3} we summarise the bounds we found on the measure of the potential valleys associated to the different $su(N)$ models. The full details of our derivations can be found in Appendices~\ref{A}-\ref{D}. In them, we determine the measure of the valleys when all the eigenvalues of a given configuration $su(N)$ matrix are different. We include detailed discussions of the cases $su(2)$, $su(3)$, $su(4)$ and general $su(N)$ with arbitrary $N$. In Appendix~\ref{C} we show that the Lebesgue measure in all the cases is finite. This implies that only certain dimensions for a given rank of the $SU(N)$ bosonic potential are allowed. In Section~\ref{4} we obtain the associated bounds for the fermionic potential. Section~\ref{5} is devoted to the main results. We establish that the spectrum of the Hamiltonians is purely discrete and demonstrate the existence and uniqueness of the ground state of the theory. In Section~\ref{6} we present a discussion of the ideas presented and our conclusions.  The gauge transformations that we employed in order to show that the valleys have finite measure, are displayed in the final Appendix~\ref{D}.
%%%%%%%%%%%%%%%%%%%%%%%%%%%%%%%%%%%%%%%%%%%%%%%%%%%%%%%%%%%%%%%%%%%%%%%%%%

\section{The \texorpdfstring{$SU(N)$}{}  Matrix Model} \label{2}
We begin by discussing the $0+1$ Matrix Model formulation of the  $11D$ Supermembrane Theory  in the Light Cone Gauge. The latter also corresponds to a $0+1$ reduction of $10D$ Super Yang Mills, known in the literature as the BFSS Matrix Model \cite{bfss}. It describes $D0$-branes interaction among themselves.

 The $D=11$ supermembrane is described in terms of the membrane coordinates $X^m$  and fermionic coordinates $\theta _{\alpha}$,  transforming as a  Majorana spinor on the target space. Both fields are scalar under worldvolume transformations. When the theory is formulated in the Light Cone Gauge the residual symmetries are the global supersymmetry, the R-symmetry $SO(9)$ and a gauge symmetry, the area preserving diffeomorphisms on the base manifold.

Once the theory is regularized by means of the group $SU(N)$, the field operators are labeled by an $SU(N)$ index $A$ and they transform in the adjoint representation of the group. The realization of the wavefunctions is formulated in terms of the $2^{8(N^2-1)}$ irreducible representation of the Clifford algebra. The Hilbert space of physical states consists of the wavefunctions which take values in the fermion Fock space, subject to the $SU(N)$ constraint given by the generator of the $SU(N)$ invariance.

Once it is shown that the zero mode states transform under $SO(9)$ as a $[(44\oplus 84)_{\mathrm{bos}}\oplus 128_{\mathrm{fer}}]$ representation which corresponds to the massless $D=11$ supergravity supermultiplet,
the construction of the ground state wave function reduces to finding a nontrivial solution to
\begin{equation*}\label{A}H\Psi=0\end{equation*}
where $H=\frac{1}{2}M^2$ and $\Psi$ subject to the $SU(N)$ constraint. The latter  is required to be a singlet under $SO(9)$ and $M$ is the mass operator of the supermembrane.
The Hamiltonian associated to the the regularized mass operator of the supermembrane \cite{dwhn} is
\begin{equation}
\begin{aligned}
H&=\frac{1}{2}M^2=-\Delta+V_{\mathrm{B}}+V_F\\ \text{where }&\begin{cases}
\Delta=\frac{1}{2}\frac{\partial^2}{\partial X^i_A\partial X_i^A}\\
V_{B}=\frac{1}{4}f_{AB}^Ef_{CDE}X_i^AX_j^BX^{iC}X^{jD}\\
V_F=if_{ABC}X_i^A\lambda_{\alpha}^B\Gamma_{\alpha\beta}^i\frac{\partial}{\partial\theta_{\beta C}}. \end{cases}
\end{aligned}
\end{equation}
The generators of the local $SU(N)$ symmetry are
\begin{equation}\varphi^A =f^{ABC}\left( X_{i}^B\partial_{X_i^C}
+\theta_{\alpha}^{B}\partial_{\theta_{\alpha}^C}\right).\end{equation}

From the supersymmetric algebra, it follows that the Hamiltonian can be express in terms of the supercharges as
\begin{equation}H =\{Q_{\alpha},Q^{\dagger}_{\alpha}\}\end{equation}
for the physical subspace of solutions, given by the kernel of the first class constraint $\varphi^A$ of the theory. That is
\begin{equation*} \label{constraint} \varphi^A\Psi=0.\end{equation*}

The Hamiltonian $H$ is a positive operator which annihilates $\Psi$, on the physical subspace, if and only if  $\Psi$ is a {singlet} under supersymmetry\footnote{$\Psi_0$, the zero mode wave function, in distinction is a supermultiplet under supersymmetry.}. In such a case,
\begin{equation}Q_{\alpha}\Psi=0\quad \text{and}\quad Q_{\alpha}^{\dagger}\Psi=0.\end{equation} This result does not hold when the theory is restricted by boundary conditions, the case that we will analyze below.

All this ensures that the wavefunction is massless, however it does not guarantee that the ground-state wave function is the corresponding supermultiplet associated to supergravity. For this, $\Psi$ must also become a singlet under $SO(9)$. The spectrum of $H$ in $L^2(\mathbb{R}^n)$ is continuous \cite{dwln}, comprising the segment $[0,\infty)$.

 The bosonic potential can be recast as
\begin{equation}
V_{\mathrm{B}}(X)=-\frac{1}2{\sum_{M,N\ge 1}^dTr[X^M,X^N]^2}=\frac{1}2{\sum_{M,N\ge 1}^dTr[X^M,X^N][X^M,X^N]^{\dag}}\end{equation}
where $X^m= X^{mA}T_A$, considering $X^{mA}$ real coordinates and $T_A$ the generators of the algebra $su(N)$\footnote{The index $A$ corresponds to a pair of indices $(a_1,a_2)$ with $a_i=0,\,\dots,\,\dim(su(N)),\quad i=1,2$ in which the value $(0,0)$ associated to the supermembrane center of mass has being excluded.}; $d=D-2$ corresponds to the number of transverse dimensions of the supermembrane in the LCG. The basis of the $su(N)$ generators satisfies
$[T_A,T_{\mathrm{B}}]=if_{ABC}T_C$ with $Tr (T_AT_{\mathrm{B}})=\delta_{AB}$.

%%%%%%%%%%%%%%%%%%%%%%%%%%%%%%%%%%%%%%%%%%%%%%%%%%%%%%%%%%%%%%%%%%%%%%%%%%%%%%%%%%%%%%%%%%%%%%%%%%%%%%%%%%%%%%%%%%%%%%%%%%%%%%%%%%%%%%%%%%%%%%%%%%%%%%%%%%%%%%%%%%%%%%%%
\section{Lebesgue measure of the valleys} \label{3}
Prescribe a height $V_0$. Let
\begin{equation}K\equiv\{X^{mA}: V_{\mathrm{B}}(X)<V_0\}.\end{equation}
We now quote the range of parameters for which the Lebesgue measure of $K$, denoted as $\mathrm{Vol}(K)$, is finite. In the appendices~\ref{A}-\ref{D}, we give precise details of how to derive these conclusions. 

Our argumentation depends on the following simple observation, which we use freely and unambiguously throughout the text.
Let $\widetilde{V}_{\mathrm{B}}(X)$ be another potential expression, such that 
\begin{equation}\widetilde{V}_{\mathrm{B}}(X)\le V_{\mathrm{B}}(X) \qquad \text{ for all matrices }X.\end{equation}
Denote by \begin{equation}\widetilde{K}=\{X^{mA}: \widetilde{V}_{\mathrm{B}}<V_0\}.\end{equation} 
If $\mathrm{Vol}(\widetilde{K})$ is finite, then so is $\mathrm{Vol}(K)$.

In appendices \ref{B4} and \ref{D} we determine, after the evaluation of several estimates, that a restriction on $d$ for each $N$  so that the Lebesgue measure of $K$ is finite turns out to be
\begin{equation}
d>2+\frac{2(N-1)}{3(N-1)-2}.
\end{equation}
This immediately renders the following.
\begin{lemma} \label{lemfinitevolK}
If 
\begin{equation}
d\ge\begin{cases}
 5 & \textrm{and} \quad N=2\\
 4 & \textrm{and} \quad N=3\\
 3 & \textrm{and} \quad N\ge 4,\end{cases}
\end{equation}

then $\operatorname{Vol}(K)<\infty$.
\end{lemma}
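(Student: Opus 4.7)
The plan is to treat the statement as a direct corollary of the bound
\[
d > 2 + \frac{2(N-1)}{3(N-1)-2}
\]
already derived in the appendices. The argument then reduces to two short arithmetic steps.

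First, I would evaluate the right-hand side in the three regimes of the statement. For $N=2$ it equals $2 + 2/1 = 4$, so the strict inequality combined with the integrality of $d = D-2$ forces $d \geq 5$. For $N=3$ it equals $2 + 4/4 = 3$, forcing $d \geq 4$. For $N=4$ it equals $2 + 6/7 < 3$, so $d \geq 3$ suffices.

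Second, to extend the last case to all $N \geq 4$ uniformly, I would check that the map $N \mapsto \frac{2(N-1)}{3(N-1)-2}$ is strictly decreasing for $N \geq 2$: setting $x = N-1 \geq 1$, the derivative of $\frac{2x}{3x-2}$ equals $-\frac{4}{(3x-2)^2}$, which is strictly negative on $x > 2/3$. Hence for every $N \geq 4$ the threshold is at most $2 + 6/7$, and the condition $d > 2 + (\text{something less than } 1)$ collapses to $d \geq 3$. Combining this with the monotonicity, any $d \geq 3$ satisfies the general bound for every $N \geq 4$.

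Assembling the three cases then yields the lemma. All of the substantive analytic work sits in the appendices---the construction of a suitable comparison potential $\widetilde{V}_{\mathrm{B}}$, the ellipsoidal decomposition along axial directions, and the gauge-fixing of Appendix~\ref{D}---so the only real obstacle in proving Lemma~1 itself is ensuring that the threshold is sharp enough to reach $d=3$ already at $N=4$; this is precisely what the inequality $6/7<1$ provides. Beyond this elementary check, nothing further is required.
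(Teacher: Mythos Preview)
Your proposal is correct and matches the paper's own reasoning: the text immediately preceding the lemma states that the inequality $d > 2 + \frac{2(N-1)}{3(N-1)-2}$ from the appendices ``immediately renders'' the result, and Appendix~\ref{B4} records precisely the same case evaluation $\frac{2(N-1)}{3(N-1)-2} = 2,\,1,\,<1$ for $N=2,\,3,\,\geq 4$. Your explicit derivative check for monotonicity is a harmless elaboration on what the paper leaves implicit.
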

In the evaluations leading to the aforementioned observation, we crucially make use of the ellipsoidal symmetry of the bosonic potential. The final result is in agreement with the previous estimates reported in \cite{Lundholm,austing1}. In the next section we will invoke some of our explicit estimates from the appendices, in order to determine a precise bound on the fermionic  potential. This will then be fundamental for our main results, reported in Section~\ref{5}.

%%%%%%%%%%%%%%%%%%%%%%%%%%%%%%%%%%%%%%%%%%%%%%%%%%%%%%%%%%%%%%%%%%%%%%%%%%%%%%%%%%%%%%%%%%%%%%%%%%%%%%%%%%%%%%%%
\section{Bounds for the fermionic potential} \label{4}

The main point of this section will be to determine an explicit estimate for the fermionic potential on any state. From this estimate, we will show in the next section that the Hamiltonian operator of the Supermembrane Theory, in the admissible dimensions given by Lemma~\ref{lemfinitevolK}, is coercive in the Fock space on $K$. To this end let us recall the properties of the valleys discussed in the appendices. The region we defined as the valleys is a star-shaped region containing directions extending to infinity along which the bosonic potential is zero. On those directions the diagonal components of the matrices tend to infinity while the non-diagonal ones remain bounded. In the appendices we prove that given a large enough distance from the origin there always exists a $SU(N)$ transformation such that the non-diagonal components not only are bounded but decrease as the inverse of a diagonal component, when this one goes to infinity.
We require an explicit expression for $V_{\mathrm{B}}(X)$, and for that we introduce the next convenient notation. The components of a diagonal matrix $\widehat{X}$ are denoted by

\begin{equation}ia_1,\,\dots,\,ia_N \qquad \text{where}\qquad \sum_{i=1}^N a_i=0.\end{equation} 
The other matrices $X^n$, $n=1,\dots,d-1$, have diagonal components

\begin{equation}ib_1^n,\,\dots,\,ib_N^n \qquad \text{where} \qquad \sum_{i=1}^N b_i^n=0\end{equation} and upper-non diagonal components \begin{equation}z_{ij}^n, \quad i<j \qquad \text{where} \qquad i,j=1,\dots,N.\end{equation} Here $a_i$ and $b_i^n$ are real numbers while $z_{ij}^n$ are complex numbers. As the $X^n$ are anti-hermitean, $z_{ji}^n=-\overline{z}_{ij}^n$ where $\overline{z}$ denotes complex conjugation for $j>i$.  We also introduce the vectors $b_i$ with components $b_i^n$, and $z_{ij}$ with components $z_{ij}^n$, $n=1,\dots,d-1$  with norm 

\begin{equation}\vert\vert b_i\vert\vert^2= \sum_{n=1}^{d-1}(b_{i}^{n})^2, \quad \vert\vert z_{ij}\vert\vert^2=\sum_{n=1}^{d-1}z_{ij}^n\overline{z_{ij}^n},\end{equation}
respectively.
The products are defined as $(b_i\cdot z_{jk})=\sum_{n=1}^{d-1}b_i^n z_{jk}^n$ and $(z_{ij} \cdot z_{kl})=\sum_{n=1}^{d-1} z_{ij}^n z_{kl}^n.$ We denote $M\equiv [1,\dots,d]$ and $I=\{(i,j): i<j, j=2,\dots,N\}$\newline

The diagonal components $a_i$ and $b_j^m$ become the variables that approach infinity along $K$. In turns, the non-diagonal components, $z_{ij}^m$ are bounded, see \eqref{eqn 76} and \eqref{eqn 81}.
Set
\begin{equation} \label{eq_for_rho}
    \begin{aligned}
    \rho_{ij}^2=&| a_i-a_j|^2+\| b_i-b_j\|^2+\| z_{ij}\|^2\\
    \widehat{\rho}_{ij}^2=&| a_i-a_j|^2+\| b_i-b_j\|^2\\\rho^2=&\sum_{i=1}^{N-1}\sum_{j>i}^N\rho_{ij}^2
    \end{aligned}
\end{equation}
where each pair $i,j$ identifies an $u(2)$ sector of the $su(N)$ matrix with components $ib_i,ib_j$ on the diagonal and $z_{ij}$ as the upper non-diagonal component. All $z_{ij}^m$ are independent but $\sum_{i=1}^N a_i=0$ and $\sum_{i=1}^N b_i^m=0$
for each $m=1,\dots,d-1.$ 

Set $\widetilde{C}>0$ arbitrarily large. In the above notation the valley $K$ can be written as the union of two further sets. That is $K=K_{-}\cup K_{+}$ where $K_{-}$ is bounded and $K_{+}$ is unbounded. Explicitly,
\begin{equation}
K_{-}=\{X\in K:\rho_{ij}\le \widetilde{C}\} \qquad \text{and} \qquad
K_+=\{X\in K:\rho_{ij}> \widetilde{C}\}.
\end{equation}
We work first on $K_{+}$. According to the gauge fixing procedure described in Appendix~\ref{C}, $\rho_{ij}>\widetilde{C}$ implies $\widehat{\rho}_{ij}>\epsilon$ where $\epsilon$ is proportional to $\widetilde{C}$. Whence $\epsilon$ approaches infinity when $\widetilde{C}$ does, see \eqref{eqn 81}. Consequently, on $K_{+}$ we have $\widehat{\rho}_{ij}>\epsilon$. Thus
\begin{equation}
    \frac{\rho_{ij}^2}{\widehat{\rho}_{ij}^2}=1+\frac{\| z_{ij}\|^2}{\widehat{\rho}_{ij}^2}<1+\frac{V_0}{\epsilon^4}
\end{equation}
in that part of the valley.

Consider now one component of $z_{ij}^m$. If $z_{ij}^m=C_{ij}^m+iD_{ij}^m$ we denote $C_{ij}$ the real part of one component. 
We then have 
$(C_{ij})^2<\frac{V_0}{\widehat{\rho}_{ij}^2}<\frac{V_0}{\epsilon^2}$. Hence, given that all other components of the matrices $X$ are fixed, $C_{ij}$ takes values on the interval $-\gamma_{+}<C_{ij}<\gamma_{+}$. Below we write $\gamma_{-}=-\gamma_{+}$, so
\begin{equation} \label{estimategammas}
\gamma_+-\gamma_-=2\gamma_+<\frac{2V_0^{1/2}}{\widehat{\rho}_{ij}}<\frac{2V_0^{1/2}}{\epsilon}.    
\end{equation}

In the statements below we consider the wave functions $\Psi(p)$ for $p\in K$, originally in the Fock space $\mathcal{F}(K)$ with components $\Psi_{I}\in H_0^1(K)$. Here and elsewhere, 
\begin{equation}
\overline{\Psi}\cdot \Psi=\sum_I \overline{\Psi}_I\Psi_I 
\qquad \text{and} \qquad \nabla \overline{\Psi}\cdot \nabla \Psi=\sum_I \nabla \overline{\Psi}_I\cdot \nabla \Psi_I .\end{equation}

\begin{lemma} \label{lemma1}
Let $\epsilon>0$ and $V_0>0$ be fixed. There exist two positive constant $C_1,C_2>0$, such that 
\begin{equation} \label{conclusionlemma1}
    \int_K
\rho\overline{\Psi}\cdot \Psi\le C_1\int_K \overline{\Psi}\cdot \Psi+
C_2 \int_K \nabla\overline{\Psi}\cdot \nabla \Psi
\end{equation}
for all $\Psi\in H^1_0(K)$.  Moreover, $C_2$ can be chosen arbitrarily small whenever $\epsilon$ is sufficiently large. 
\end{lemma}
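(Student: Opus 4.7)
The plan is to split $K=K_{-}\cup K_{+}$ as in the preamble and treat the two regions differently. On the bounded piece $K_{-}$ every $\rho_{ij}\le \widetilde{C}$, so $\rho\le C(N)\widetilde{C}$ is just a constant and
\begin{equation*}
\int_{K_{-}}\rho\,\overline{\Psi}\cdot\Psi\le C(N)\widetilde{C}\int_{K}\overline{\Psi}\cdot\Psi
\end{equation*}
feeds directly into the $C_1$ term. The entire content of the lemma lies in bounding the integral on the unbounded piece $K_{+}$, where the trick is to exploit the fact that the valley becomes narrow in the $z_{ij}$ directions precisely when $\widehat{\rho}_{ij}$ is large, via a one-dimensional Poincar\'e inequality.

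To implement this, I would first use $\rho\le\sum_{i<j}\rho_{ij}$ and cover $K_{+}\subset\bigcup_{i<j}K_{+,ij}$ with $K_{+,ij}=\{X\in K:\rho_{ij}>\widetilde{C}\}$. On $K_{+,ij}$ the gauge-fixing estimate from Appendix~\ref{C} gives $\widehat{\rho}_{ij}>\epsilon$, and the bound (\ref{estimategammas}) shows that, after fixing all coordinates other than one real component $C_{ij}$ of some $z_{ij}^{m}$, the slice $\{C_{ij}:X\in K\}$ is contained in an interval $(\gamma_{-},\gamma_{+})$ of length less than $2V_{0}^{1/2}/\widehat{\rho}_{ij}$. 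Since $\Psi\in H^{1}_{0}(K)$ extends by zero to an element of $H^{1}(\mathbb{R}^{d})$, its restriction to this slice is in $H^{1}_{0}(\gamma_{-},\gamma_{+})$, so the classical one-dimensional Poincar\'e inequality yields
\begin{equation*}
\int_{\gamma_{-}}^{\gamma_{+}}|\Psi|^{2}\,dC_{ij}\le \frac{4V_{0}}{\widehat{\rho}_{ij}^{2}}\int_{\gamma_{-}}^{\gamma_{+}}|\partial_{C_{ij}}\Psi|^{2}\,dC_{ij}.
\end{equation*}
Multiplying by $\rho_{ij}$, observing that $\rho_{ij}/\widehat{\rho}_{ij}^{2}\le (1+V_{0}/\epsilon^{4})^{1/2}/\epsilon$ on $K_{+,ij}$ by the preamble estimate on $\rho_{ij}/\widehat{\rho}_{ij}$, and integrating over the remaining coordinates using Fubini, gives
\begin{equation*}
\int_{K_{+,ij}}\rho_{ij}\,\overline{\Psi}\cdot\Psi\le \frac{4V_{0}(1+V_{0}/\epsilon^{4})^{1/2}}{\epsilon}\int_{K}\nabla\overline{\Psi}\cdot\nabla\Psi.
\end{equation*}
On the complement $K\setminus K_{+,ij}$ we have $\rho_{ij}\le\widetilde{C}$, contributing a further term $\widetilde{C}\int_{K}\overline{\Psi}\cdot\Psi$. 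Summing over the finitely many pairs $(i,j)$ and combining with the $K_{-}$ estimate assembles (\ref{conclusionlemma1}) with $C_{1}=O(\widetilde{C})$ and $C_{2}=O(V_{0}/\epsilon)$, so that $C_{2}\to 0$ as $\epsilon\to\infty$, as required.

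The main technical difficulty is justifying the slicing step rigorously. One needs to show that, for almost every choice of the remaining coordinates, the $C_{ij}$-section of $K$ is an interval (or at least contained in one) of length bounded by $2V_{0}^{1/2}/\widehat{\rho}_{ij}$, and that the zero extension of $\Psi$ belongs to $H^{1}_{0}$ on that section. This follows from $V_{\mathrm{B}}<V_{0}$ forcing $|C_{ij}|$ into a compact interval once all other coordinates are fixed, together with a standard density/Fubini argument for functions in $H^{1}_{0}(K)$. The only other point requiring care is a book-keeping one: applying the scalar Poincar\'e inequality componentwise in the Fock-space index $I$ and summing, so that the result is stated for the full $\overline{\Psi}\cdot\Psi$ and $\nabla\overline{\Psi}\cdot\nabla\Psi$, a step that is immediate from the definitions given just before the lemma.
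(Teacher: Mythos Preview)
Your proof is correct and follows the same overall strategy as the paper: split $K=K_{-}\cup K_{+}$, handle $K_{-}$ trivially by the uniform bound $\rho_{ij}\le\widetilde C$, and on $K_{+}$ exploit the narrowness of the valley in a transverse $z_{ij}$-coordinate via a one-dimensional Poincar\'e-type inequality, then sum over the $u(2)$ sectors. The implementation of the Poincar\'e step differs slightly. The paper applies the fundamental theorem of calculus and Cauchy--Schwarz to the weighted quantity $(\rho_{ij}\,\overline{\Psi}_I\Psi_I)^{1/2}$; since $\rho_{ij}$ itself depends on the slicing variable $C_{ij}$, differentiating it produces an extra $L^{2}$ term $\frac{V_{0}^{1/2}}{\epsilon^{2}}\int\overline{\Psi}_I\Psi_I$ on $K_{+}$, which is then absorbed into $C_{1}$. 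You instead apply the standard Poincar\'e inequality directly to $\Psi$ and afterwards insert the pointwise bound $\rho_{ij}\le\widehat\rho_{ij}(1+V_{0}/\epsilon^{4})^{1/2}$ as a weight, which is cleaner and yields no $L^{2}$ contribution from $K_{+}$. Both routes give $C_{2}=O(V_{0}/\epsilon)\to 0$ as $\epsilon\to\infty$, which is the point that matters for Lemma~\ref{lemma2}. One small wording fix: when you write ``multiplying by $\rho_{ij}$'', make explicit that you first replace $\rho_{ij}$ by the pointwise upper bound $\widehat\rho_{ij}(1+V_{0}/\epsilon^{4})^{1/2}$, which is constant in $C_{ij}$, \emph{before} invoking Poincar\'e and Fubini, since $\rho_{ij}$ itself varies along the slice.
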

\begin{proof}
Since $C^\infty_{\mathrm{c}}(K)$ is a dense subspace of $H^1_0(K)$ and $\rho$ is smooth, it suffices to find a constant independent of the wave function such that  the inequality holds true for all $\Psi\in C^\infty_{\mathrm{c}}(K)$. 
%Now, any $\Psi\in C^\infty_{\mathrm{c}}(K)$ can be extended to $C^\infty_{\mathrm{c}}(\widetilde{K})$ by zero. Therefore, without loss of generality, we assume in this proof that $\Psi\in C^\infty_{\mathrm{c}}(\widetilde{K})$.

The argument in the rest of the proof follows Poincar{\'e}'s Lemma, using in addition that the transverse components decrease, as $\frac{V_0^{1/2}}{\widehat{\rho}_{ij}}$ along the valleys extend to infinite. %Note that the constant $\widehat{V}_0$ associated to $\widehat{K}$ has the general expression for $N\ge 3$
%\begin{equation*}
   % \frac{\widehat{V}_0}{V_0}=  \left(1+2N(N-1)\frac{V_0}{\epsilon^4}\right)>1
%\end{equation*}
% which is an adimensional constant.
 
At $\gamma_{-}$, $\Psi_I=0$, because $\Psi_I\in C^\infty_{\mathrm{c}}(\widetilde{K})$. Then, whenever $p\in K_+$, each component $\Psi_I(p)$ can be written as 
 \begin{equation*}
    (\rho_{ij}\overline{\Psi}_I\Psi_I)^{1/2}=\int_{\gamma_{-}}^{C_{ij}} \mathrm{d}\gamma \, \left[\partial_{\gamma}(\rho_{ij}\overline{\Psi}_I\Psi_I)^{1/2}\right].
\end{equation*}
Applying the Cauchy-Schwartz Inequality, we get
 \begin{equation*}
    (\rho_{ij}\overline{\Psi}_I\Psi_I)^{1/2}\le (\gamma_+-\gamma_{-})^{1/2}\left\{\int_{\gamma_{-}}^{\gamma_{+}} \mathrm{d}\gamma\, \, \left[\partial_{\gamma}(\rho_{ij}\overline{\Psi}_I\Psi_I)^{1/2}\right]^2 \right\}^{1/2}.
\end{equation*}
 Consequently, as $\frac{\partial \rho_{ij}}{\partial \gamma}=\rho_{ij}^{-1}\gamma$ and because of \eqref{estimategammas},
 \begin{equation*}
    \rho_{ij}\overline{\Psi}_I\Psi_I\le \frac{4V_0^{1/2}}{\widehat{\rho}_{ij}}\int_{\gamma_{-}}^{\gamma_{+}} \mathrm{d}\gamma\, \left\{ \frac{1}{4}\gamma^2\rho^{-3}_{ij}\overline{\Psi}_{I}\Psi_I+\rho_{ij}[\partial_{\gamma}(\overline{\Psi}_I\Psi_I)^{1/2}]^2\right\}.
\end{equation*}
Moreover,
$\gamma^2\rho^{-3}_{ij}\le \widehat{\rho}_{ij}^{-1}(\gamma^2\rho^{-2}_{ij})\le \widehat{\rho_{ij}}^{-1}$ hence, after a straightforward calculation,
\begin{equation*}
 \rho_{ij}\overline{\Psi}_I\Psi_I\le\frac{V_0^{1/2}}{\epsilon^2}\int_{\gamma_{-}}^{\gamma_{+}} \mathrm{d}\gamma\,  \overline{\Psi}_{I}\Psi_I+4V_0^{1/2}\left[1+\frac{V_0^{1/2}}{\epsilon^2}\right]\int_{\gamma_{-}}^{\gamma_{+}} \mathrm{d}\gamma\, [\partial_{\gamma}(\overline{\Psi}_I\Psi_I)^{1/2}]^2.
\end{equation*}

Considering all derivatives of $(\overline{\Psi}_I\Psi_I)^{1/2}$ we then get
\begin{equation*}
    \rho_{ij}\overline{\Psi}_I\Psi_I\le\frac{V_0^{1/2}}{\epsilon^2}\int_{\gamma_{-}}^{\gamma_{+}} \mathrm{d}\gamma\,  \overline{\Psi}_{I}\Psi_I+4V_0^{1/2}\left[1+\frac{V_0^{1/2}}{\epsilon^2}\right]\int_{\gamma_{-}}^{\gamma_{+}} \mathrm{d}\gamma\, (\nabla\overline{\Psi}_I\cdot \nabla\Psi_I)
\end{equation*}
 which is valid for each $i,j$ - $su(2)$ sector. We have used the inequality
 \begin{equation}\nabla|\Psi_I|\cdot\nabla|\Psi_I|\le \nabla\overline{\Psi}_I\cdot\nabla\Psi_I\end{equation}
 with $|\Psi_I|=(\overline{\Psi}_I\Psi_I)^{1/2}$, which is always valid. We may now integrate on  $K_{+}$, to get
 \begin{equation}\label{eqn 5.7}
     \int_{K_+}\rho_{ij}\overline{\Psi}_I\Psi_I\le\frac{2V_0}{\epsilon^3} \int_{K_+}\overline{\Psi}_I\Psi_I+\frac{8V_0}{\epsilon}\left[1+\frac{V_0^{1/2}}{\epsilon^2}\right]\int_{K_+}\nabla\overline{\Psi}_I\cdot \nabla\Psi_I
 \end{equation}
 
On the other hand, on $K_{-}$, we have 
 \begin{equation}\label{eqn 5.8}
     \int_{K_-}\rho_{ij}\overline{\Psi}_I\Psi_I\le\widetilde{C} \int_{K_-}\overline{\Psi}_I\Psi_I.
 \end{equation}
 Since \eqref{eqn 5.7} and \eqref{eqn 5.8} are valid for each component $\Psi_I$,
 \begin{equation}\label{eqn 5.9}
     \int_{K}\rho_{ij}\overline{\Psi}_I\cdot\Psi_I\le C_1 \int_{K}\overline{\Psi}\cdot \Psi + C_2\int_{K}\nabla\overline{\Psi}\cdot \nabla \Psi
 \end{equation}
where
 \begin{equation*}
     C_1=\max\left(\frac{2V_0}{\epsilon^{3}},\widetilde{C}\right), \quad C_2= \frac{8V_0}{\epsilon}\left(1+\frac{V_0^{1/2}}{\epsilon^2}\right)
 \end{equation*}
 % Since \eqref{eqn 5.9} is valid for each $u(2)$ sector of the algebra $su(N)$, as defined above and we extended $\Psi_I$ by zero to $\widehat{K}$, we obtain \eqref{conclusionlemma1} on $\widehat{K}$ and hence on $K$ too. 
  For the final claim, it is enough to take $\epsilon\to \infty$ to get $C_2\to 0$. After summation on all the $u(2)$ sectors the constants $C_1$ and $C_2$ acquire a factor $\frac{N(N-1)}{2}$.
\end{proof}

The fact that we can ensure $C_2$ becomes arbitrarily small is crucial when proving the coercivity part of the next result.  Note that $C_1$ could be very large and this has no significant effect in the statement.

\begin{lemma} \label{lemma2}
The Hamiltonian of the Supermembrane, valued on the $su(N)$ algebra, has a Dirichlet form coercive on the Fock space $\mathcal{F}(K)$.
\end{lemma}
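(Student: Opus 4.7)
The plan is to exploit the positivity of the bosonic potential together with Lemma~\ref{lemma1} to absorb the fermionic contribution into the kinetic term. Write the quadratic form of $H$ restricted to $\mathcal{F}(K)$ as
\begin{equation*}
h[\Psi] = \int_K \nabla\overline{\Psi}\cdot\nabla\Psi + \int_K V_B\,\overline{\Psi}\cdot\Psi + \int_K \langle V_F\Psi,\Psi\rangle_{\mathcal{F}},
\end{equation*}
for $\Psi\in\mathcal{F}(K)$ with each $\Psi_I\in H^1_0(K)$. The first two summands are already nonnegative: the kinetic term is manifestly so, and $V_B=\tfrac{1}{2}\sum Tr[X^M,X^N][X^M,X^N]^{\dag}\ge 0$ is a sum of squares. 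Hence coercivity reduces to controlling the fermionic integral from below by a small multiple of $\|\nabla\Psi\|^2$ plus a (possibly large) multiple of $\|\Psi\|^2$.

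The first concrete step is a pointwise bound
\begin{equation*}
\bigl|\langle V_F\Psi,\Psi\rangle_{\mathcal{F}}\bigr|\le c_1\,\rho\,\overline{\Psi}\cdot\Psi
\qquad\text{on } K.
\end{equation*}
Two ingredients enter. First, $V_F=if_{ABC}X^A_i\lambda^B_\alpha\Gamma^i_{\alpha\beta}\partial/\partial\theta_{\beta C}$ is linear in the coordinates $X^{mA}$, with coefficients that are bounded operators on the $2^{8(N^2-1)}$-dimensional fermion Fock space; their norms are uniformly controlled by a constant depending only on $N$, $d$ and the structure constants $f_{ABC}$. Second, the Hilbert--Schmidt norm obeys $|X|^2:=\sum_{m,A}(X^{mA})^2\le c_0\,\rho^2$, which is a direct consequence of the definition \eqref{eq_for_rho}, since summation over all the $u(2)$-sectors $(i,j)$ reconstructs the full squared Frobenius norm of the matrices. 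Combining these two facts and using Cauchy--Schwarz in the fermionic variables yields the stated pointwise estimate.

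Integrating this pointwise bound over $K$ and applying \eqref{conclusionlemma1} gives
\begin{equation*}
\Bigl|\int_K \langle V_F\Psi,\Psi\rangle_{\mathcal{F}}\Bigr|
\;\le\; c_1\int_K \rho\,\overline{\Psi}\cdot\Psi
\;\le\; c_1C_1\|\Psi\|^2 + c_1C_2\|\nabla\Psi\|^2 ,
\end{equation*}
where, for fixed but arbitrary $V_0$, $C_2$ can be made as small as desired by enlarging $\epsilon$. Choosing $\epsilon$ so that $c_1C_2\le 1/2$ produces
\begin{equation*}
h[\Psi]\;\ge\;\tfrac{1}{2}\|\nabla\Psi\|^2 + \int_K V_B\,\overline{\Psi}\cdot\Psi - c_1C_1\|\Psi\|^2,
\end{equation*}
which is precisely coercivity of the Dirichlet form of $H$ on $\mathcal{F}(K)$ in the standard sense.

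The main obstacle I expect is the careful bookkeeping of constants in the pointwise estimate for $V_F$, especially tracking how the operator bound on the fermionic coefficients and the factor $N(N-1)/2$ appearing at the end of the proof of Lemma~\ref{lemma1} combine into $c_1$ and $C_2$. Because the Fock space is finite dimensional for each fixed $N$, this is not an analytic difficulty but a quantitative one: one must verify that $c_1C_2$ can genuinely be forced below the chosen threshold while $c_1C_1$ remains finite. Once that is done, the coercivity just proved, combined with the finiteness of $\mathrm{Vol}(K)$ from Lemma~\ref{lemfinitevolK} and the resulting compact embedding $H^1_0(K)\hookrightarrow L^2(K)$, will provide the analytic infrastructure needed for the existence, uniqueness and discreteness statements of Section~\ref{5}.
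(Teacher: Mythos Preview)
Your proposal is correct and follows essentially the same route as the paper: both arguments use the linearity of $V_F$ in $X$ together with the equivalence $|X|\lesssim\rho$ (in the paper phrased as the diagonal components being rewritten in terms of the $a_i-a_j$, $b_i-b_j$ with bounded Jacobian) to obtain $|\langle V_F\Psi,\Psi\rangle|\le C\int_K\rho\,\overline\Psi\cdot\Psi$, then apply Lemma~\ref{lemma1} and choose $\epsilon$ large so that $CC_2<1$. Your write-up is slightly more explicit about why the fermionic coefficients act as bounded operators on the finite-dimensional Fock fibre, but the strategy and the key inequalities coincide with the paper's proof.
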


\begin{proof}
Firstly, the bosonic potential can be expressed in terms of the matrices $X^m$, valued on the $su(N)$ algebra, where $m=1,\dots,d,$
\begin{equation} X^m=\sum_A X^m_AT_A, \quad \operatorname{Tr}  T_A T_{\mathrm{B}}=\delta_{AB}.\end{equation} 
The diagonal components, associated to the diagonal generators can be re-written, in terms $(a_i-a_j)$ and $(b_k^m-b_l^m)$ components with a bounded Jacobian. The non-diagonal components $z_{ij}^m$, coincide with the corresponding components of the non-diagonal generators $T_A$.

Combining the linearity of $V_F$ in $X$ with Lemma~\ref{lemma1}, it follows that there exists a constant $C>0$ such that
 \begin{equation}\int_{K}\overline{\Psi}\cdot V_F\Psi\ge -C\int_{K}\rho\overline{\Psi}\cdot \Psi\ge-C C_1 \int_{K}\overline{\Psi}\cdot \Psi-C C_2\int \nabla\overline{\Psi}\cdot\nabla\Psi.\end{equation}
 
Given $V_0$, choose $\epsilon$ large enough, such that $CC_2<1$.
Then
\begin{align*}
\int_K \nabla \overline{\Psi}\cdot \nabla \Psi&+\int_K \overline{\Psi}\cdot V_\mathrm{B} \Psi +\int_K \overline{\Psi}\cdot V_{\mathrm{F}} \Psi \geq \\
&(1-C C_2)\int_K    \nabla \overline{\Psi}\cdot \nabla \Psi+
 \int_K \overline{\Psi}\cdot (V_\mathrm{B}-C C_1) \Psi.
\end{align*}
Consequently, for the Hamiltonian operator $H=-\Delta+V_{\mathrm{B}}+V_F$,
 \begin{equation}\label{eqn 5.14}(\Psi,H\Psi)_{L^2(K)}\ge \lambda(\Psi,\Psi)_{L^2(K)}+\widehat{C}\| \Psi\|_{H^1(K)}\end{equation}
 for all $\Psi\in\mathcal{F}(K)$. Here $\lambda$ and $\widehat{C}$ are real constants, satisfying $\widehat{C}>0$ and $\lambda$ is possibly negative but it is bounded from below. This implies that the Dirichlet form of the Hamiltonian is coercive on $\mathcal{F}(K)$ as claimed.
\end{proof}

%%%%%%%%%%%%%%%%%%%%%%%%%%%%%%%%%%%%%%%%%%%%%%%

\section{Main results} \label{5}

Let 
\begin{equation}
     D(\Psi,\Phi)=\int_K \overline{\nabla \Psi} \cdot \nabla \Phi +
\int_K \overline{\Psi}\cdot (V_{\mathrm{B}}+V_{\mathrm{F}})\Phi 
\end{equation}
be the Dirichlet form associated to the left hand side of \eqref{eqn 5.14}. Let
\begin{equation}
     \tilde{D}(\Psi,\Phi)=D(\Psi,\Phi)-\lambda (\Psi,\Phi)_{L^2(K)}
\end{equation}
where the parameter $\lambda$ is as in the proof of Lemma~\ref{lemma2}.
Since
\begin{equation}
\begin{aligned}
     |(\Psi&,V_{\mathrm{F}}\Phi)|_{L^2(K)} \leq \\ & \frac{1}{2}|(\Psi-\Phi,V_{\mathrm{F}}(\Psi-\Phi))_{L^2(K)}|+ \frac{1}{2}|(\Psi-i\Phi,V_{\mathrm{F}}(\Psi-i\Phi))_{L^2(K)}|\\&+|(\Psi,V_{\mathrm{F}}\Psi)_{L^2(K)}|+|(\Phi,V_{\mathrm{F}}\Phi)_{L^2(K)}|,
\end{aligned}
\end{equation}
it is readily seen that $\tilde{D}(\Psi,\Phi)$ is strongly coercive and bounded in $H^1_0(K)$.  Then, by virtue of the Lax-Milgram Theorem, it follows that there exists a bounded operator $T:L^2(K)\longrightarrow H^1_0(K)\cap H^2_{\mathrm{loc}}(K)$ such that for any $\Xi\in L^2(K)$ 
\begin{equation}
      \tilde{D}(\Psi,T\Xi)=(\Psi,\Xi)_{L^2(K)} \qquad \text{for all }\Psi\in H^1_0(K).
\end{equation}
That is $T=(H-\lambda)^{-1}$ is a well defined bounded operator. This fact has an important consequence highlighted below. 
   
We know that $H^1(K)$ is compactly embedded in $L^2(K)$,  [Theorem~3, in Section~3 of \cite{berger-schechter}]. Hence, the composition of $T$ with the inclusion operator from $H^1_0(K)\cap H^2_{\mathrm{loc}}(K)\subset H^1(K)$ into $L^2(K)$, is a compact operator on $L^2(\Omega)$. In other words, the resolvent $T$ of $H$ at $\lambda$, is compact. This implies two main consequences.

\begin{lemma} The Hamiltonian $H$ considered in this paper, with domain in  $H^1_0(K)$, has a purely discrete spectrum of eigenvalues, each of finite multiplicity, with no accumulation point other than $+\infty$.
\end{lemma}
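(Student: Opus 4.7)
The plan is to read off the spectral structure of $H$ directly from the compactness of its resolvent $T=(H-\lambda)^{-1}$ established in the paragraph immediately preceding the statement; the remainder is essentially abstract spectral theory. First I would apply the Riesz--Schauder theorem for compact operators on a Hilbert space: the spectrum $\sigma(T)$ is countable, every non-zero element of $\sigma(T)$ is an eigenvalue of finite algebraic multiplicity, and $0$ is the only possible accumulation point. Because $T$ is a genuine two-sided inverse by construction (the Lax--Milgram argument yields injectivity, and $\mathrm{Ran}(T)\supset H^1_0(K)\cap H^2_{\mathrm{loc}}(K)$ is dense in $L^2(K)$), the value $0$ is not itself an eigenvalue of $T$.

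Second, I would transfer this structure to $H$ via the spectral-mapping correspondence
\begin{equation*}
\mu\in\sigma(T)\setminus\{0\}\quad\Longleftrightarrow\quad \lambda+\mu^{-1}\in\sigma(H),
\end{equation*}
which identifies each eigenvalue $\mu_n$ of $T$ with an eigenvalue $\lambda_n=\lambda+\mu_n^{-1}$ of $H$ of the same finite multiplicity. The accumulation $\mu_n\to 0$ then translates into accumulation of the $\lambda_n$ at $\pm\infty$ only. Third, the coercivity inequality \eqref{eqn 5.14} from Lemma~\ref{lemma2} provides the uniform lower bound $(\Psi,H\Psi)_{L^2(K)}\ge \lambda\,(\Psi,\Psi)_{L^2(K)}$, which rules out accumulation at $-\infty$ and leaves $+\infty$ as the unique admissible accumulation point.

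The only subtlety I would flag, and expect to be the mildest of obstacles, is the justification of the spectral-mapping step in the form written above. This requires $H$ to be a closed densely-defined operator with $\lambda$ in its resolvent set. Both properties follow from identifying $H$ with the operator associated with the closed strongly coercive form $\tilde D$ (the fermionic part contributing only a relatively-bounded symmetric perturbation, as is implicit in the polarisation-type estimate displayed just before the invocation of Lax--Milgram). Since that identification was already implicit in the construction of $T$, no new estimate specific to the supermembrane structure is required, and the lemma follows.
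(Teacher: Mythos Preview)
Your proposal is correct and follows essentially the same approach as the paper: the paper's one-line proof simply states that the result is a direct consequence of the coercivity and the compact embedding, since the resolvent of $H$ is compact. You have merely unpacked the standard Riesz--Schauder and spectral-mapping argument that underlies that sentence.
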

\begin{proof} This is a direct consequence of the coercivity and the compact embedding, as the resolvent of $H$ becomes a compact operator.\end{proof} 

\begin{theorem}
Given $\Xi\in L^2(K)$, there exists a unique $\Psi\in H^1_0(K)$ such that 
\begin{equation}  \label{theorem1}
     D(\Phi,\Psi)=(\Phi,\Xi)_{L^2(K)} \qquad \text{for all } \Phi\in H^1_0(K).
\end{equation}
\end{theorem}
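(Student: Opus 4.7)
The plan is to reformulate the weak equation as an operator equation on $L^2(K)$, apply the Fredholm alternative through the compactness of the resolvent $T$ already established, and finally dispose of the Fredholm obstruction by exhibiting the triviality of $\ker H$ via the supersymmetric structure of the Hamiltonian.

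First, using the decomposition $D=\tilde{D}+\lambda(\cdot,\cdot)_{L^2(K)}$ already in place, the equation $D(\Phi,\Psi)=(\Phi,\Xi)_{L^2(K)}$ for all $\Phi\in H^1_0(K)$ is equivalent to
\begin{equation*}
\tilde{D}(\Phi,\Psi)=(\Phi,\Xi-\lambda\Psi)_{L^2(K)} \qquad \text{for all }\Phi\in H^1_0(K).
\end{equation*}
Since $\tilde{D}$ is bounded and strongly coercive on $H^1_0(K)$, and the operator $T$ is characterized by $\tilde{D}(\Phi,T\eta)=(\Phi,\eta)_{L^2(K)}$ for every $\eta\in L^2(K)$, the reformulation above is equivalent to the $L^2(K)$ identity
\begin{equation*}
\Psi=T\Xi-\lambda T\Psi, \qquad \text{i.e.,} \qquad (I+\lambda T)\Psi=T\Xi.
\end{equation*}

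Next, I would invoke the compactness of $T$ on $L^2(K)$. Then $I+\lambda T$ is a compact perturbation of the identity, hence Fredholm of index zero. By the Fredholm alternative, the existence and uniqueness of $\Psi\in H^1_0(K)$ for every $\Xi\in L^2(K)$ reduces to the injectivity of $I+\lambda T$ on $L^2(K)$. A kernel element $\Psi$ would satisfy $T\Psi=-\lambda^{-1}\Psi$, and applying $H-\lambda$ to both sides would yield $H\Psi=0$ with $\Psi\in H^1_0(K)$. Thus the task boils down to showing $\ker H=\{0\}$.

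The main obstacle is securing this triviality of the kernel. The preceding lemma already guarantees that zero, if present, is an isolated eigenvalue of finite multiplicity, so one only has to exclude nontrivial $\Psi\in H^1_0(K)$ satisfying $H\Psi=0$. I would adapt the strategy employed in \cite{bgmrGS} for the inner Dirichlet problem: from the supersymmetric factorization $H=\{Q_\alpha,Q_\alpha^\dagger\}$ it follows that any $\Psi\in\ker H$ is annihilated by both $Q_\alpha$ and $Q_\alpha^\dagger$; combining this first-order system with the homogeneous Dirichlet trace $\Psi|_{\partial K}=0$ and the polynomial form of the bosonic and fermionic potentials forces $\Psi\equiv 0$. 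Once this is in place, the Fredholm alternative delivers the unique $\Psi\in H^1_0(K)$ sought, accompanied by the a priori bound $\|\Psi\|_{H^1_0(K)}\leq C\|\Xi\|_{L^2(K)}$ inherited from the boundedness of $(I+\lambda T)^{-1}T$.
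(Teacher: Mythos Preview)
Your proposal is correct and follows essentially the same route as the paper: reduce via the compact resolvent $T=(H-\lambda)^{-1}$ to the Fredholm alternative, and then eliminate the obstruction by showing $\ker H=\{0\}$ through the supersymmetric factorization $H=\{Q_\alpha,Q_\alpha^\dagger\}$ combined with the homogeneous Dirichlet trace. The only cosmetic difference is that the paper invokes Lemma~1 of \cite{PLB2019} rather than \cite{bgmrGS} for the step ``$Q\Psi=Q^\dagger\Psi=0$ with $\Psi\in H^1_0(K)$ implies $\Psi=0$'', and it phrases the Fredholm reduction in spectral language (zero is not an eigenvalue) rather than via the explicit operator equation $(I+\lambda T)\Psi=T\Xi$.
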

\begin{proof}
Since the spectrum of $H$ comprises only isolated eigenvalues of finite multiplicity, we just have to verify that $\operatorname{Ker}(H)=\{0\}$. If $H\widetilde{\Psi}=0$ for some non-zero $\widetilde{\Psi}\in H^1_0(K)$, then $D(\widetilde{\Psi},\widetilde{\Psi})=0$. Hence $Q(\widetilde{\Psi})=Q^{\dagger}(\widetilde{\Psi})=0$.
Thus, from Lemma~1 of \cite{PLB2019}, $\widetilde{\Psi}=0$.
This is clearly a contradiction. So indeed $\operatorname{Ker}(H)=\{0\}$ and by the Fredholm Alternative, the stated result follows.
\end{proof}

Note that the regularity properties of elliptic operators ensure that in the above theorem, $\Psi\in H^1_0(K)\cap H^2_{\mathrm{loc}}(K)$. Consequently $\Psi$ is the unique solution of
\begin{equation}
    H \Psi=\Xi \ \text{ in }K \qquad  \Psi\in H^1_0(K)\cap H^2_{\mathrm{loc}}(K).
\end{equation}
Given $\Theta\in H^2(K)$, set $\Xi=-H\Theta\in L^2(K)$ and $\Phi=\Psi+\Theta\in H^1_0(K)\cap H^2_{\mathrm{loc}}$. Then 
\begin{align*}
  H\Phi&=0 \qquad \text{in }K \\
  \Phi&=\Theta \qquad \text{on }\partial K.
\end{align*}
That is, $\Phi$ is the unique solution to the homogeneous Dirichlet problem  associated with the region $K$ for the Hamiltonian $H$. 

If in the above equations we impose the constraint $\varphi^A\Psi=0$ for all $A=1,\ldots,(N^2-1)$ and $\Xi\in H^1(K)$, then 
\begin{equation}
\varphi^A \Xi=\varphi^A H \Psi=H\varphi^A\Psi=0.
\end{equation}
That is, $\Xi$ also satisfies the constraint. Hence, in the search of the ground state, there is no loss of generality when imposing the constraint on $\Xi$. For the potential we have, we know that
\begin{equation}
    \varphi^A V_{\mathrm{B}}(X,z,\overline{z})=0
\end{equation}
for all indices $A$ also. And for all real parameters $\xi_A$
\begin{equation}
    \xi_A\varphi^A X_i^C=\xi_A f^{ABC} X_{iB}\quad \text{and} \quad
\xi_A\varphi^A z^C=\xi_A f^{ABC} z_{B}.
\end{equation}
Consider in $\mathbb{R}^n$ the transformation
\begin{equation}
    X^C_i\mapsto X_i^C+\xi_A\varphi^A X_i^C,\quad
    z^C\mapsto z^C+\xi_A\varphi^A z^C,\quad 
    \overline{z}^C\mapsto \overline{z}^C+\xi_A\varphi^A \overline{z}^C
\end{equation} 
where $\xi$ is an infinitesimal parameter. Then
\begin{align*}
    &V_{\mathrm{B}}(X_i^C+\xi_A\varphi^A X_i^C,z^C+\xi_A\varphi^A z^C,\overline{z}^C+\xi_A\varphi^A \overline{z}^C)= \\
&V(X,z,\overline{z})+\xi_Af^{ABC} \left(X_{iB} \frac{\partial}{\partial X^C_i}+z_{\mathrm{B}} \frac{\partial}{\partial z^C} +\overline{z}_{\mathrm{B}} \frac{\partial}{\partial z^C} \right) V_{\mathrm{B}}(X,z,\overline{z}) \\ &+
O(|\xi|^2).
\end{align*}
Thus, the constraint generates transformations of coordinates on $\mathbb{R}^n$ which preserve the value of $V_{\mathrm{B}}$. Furthermore $\Psi=0$ on $\partial K$. Hence
\begin{equation}
    \Psi(X+\xi\varphi X,z+\xi \varphi z,\overline{z}+\xi \varphi \overline{z},\theta+\xi \varphi \theta)=0
\end{equation}
on $\partial K$, because the fields on the $\theta$ expansion are also evaluated at $\partial K$ and by the conditions of equation they are zero. 
Since
\begin{align*}
\Psi(X+\xi\varphi X,z+\xi \varphi z,&\overline{z}+\xi \varphi \overline{z},\theta+\xi \varphi \theta)|_{\partial K}= \\
& \Psi(X,z,\overline{z},\theta)|_{\partial K} +
\xi \varphi \Psi(X,z,\overline{z},\theta)|_{\partial K}+O(|\xi|^2),
\end{align*}
we then obtain 
\begin{equation}
    \varphi^A\Psi(X,z,\overline{z},\theta)=0 \qquad \text{on }\partial K.
\end{equation}
Moreover, if $\Psi\in H^1_0(K)\cap H^2_{\mathrm{loc}}(K)$, we get
$\varphi^A\Psi\in H^1_0(K)$. All this ensures the validity of the following.

\begin{lemma}
Let $\Theta\in H^3(\mathbb{R}^n)$ and $\varphi^A\Theta=0$ for $A=1,\ldots,(N^2-1)$. Then the solution $\Psi$ in the context of the weak problem \eqref{theorem1} also satisfies $\varphi^A\Psi=0$.
\end{lemma}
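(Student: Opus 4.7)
The plan is to combine the gauge invariance $[H,\varphi^A]=0$ with the uniqueness established by the preceding theorem: I will show that $\varphi^A\Psi$ solves $H\varphi^A\Psi=0$ and belongs to $H^1_0(K)$, whereupon $\operatorname{Ker}(H)=\{0\}$ forces $\varphi^A\Psi=0$.

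First I would verify that the inhomogeneous datum $\Xi$ (which, following the construction preceding the lemma, is the choice $\Xi=-H\Theta$) is annihilated by $\varphi^A$. Since $f^{ABC}$ is totally antisymmetric and $\varphi^A V_{\mathrm{B}}=0$, a direct computation yields the distributional commutation $\varphi^A H=H\varphi^A$; the extra regularity $\Theta\in H^3(\mathbb{R}^n)$ is exactly what is required so that both $H\Theta$ and $H\varphi^A\Theta$ make sense in $L^2(K)$. The hypothesis $\varphi^A\Theta=0$ then gives $\varphi^A\Xi=-H\varphi^A\Theta=0$.

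Next I would apply $\varphi^A$ to the strong form $H\Psi=\Xi$ in $K$, which holds because elliptic regularity promotes the weak solution to $\Psi\in H^1_0(K)\cap H^2_{\mathrm{loc}}(K)$, as noted in the remark after the theorem. Invoking the commutation once more produces $H\varphi^A\Psi=0$ distributionally on $K$. It remains to establish that $\varphi^A\Psi\in H^1_0(K)$. Interior regularity provides membership in $H^1_{\mathrm{loc}}(K)$, and the paragraph immediately preceding the lemma supplies the boundary vanishing: since the infinitesimal transformation $X\mapsto X+\xi_A\varphi^A X$ preserves $V_{\mathrm{B}}$, it preserves both $K$ and $\partial K$, so evaluating $\Psi$ along the orbit and expanding to first order in $\xi$ yields $\varphi^A\Psi=0$ on $\partial K$ in the sense of traces. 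Combined with the decay at infinity inherited from $\Psi\in H^1_0(K)$, this places $\varphi^A\Psi$ in $H^1_0(K)$.

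With both properties in hand, the conclusion is immediate: the theorem above asserts $\operatorname{Ker}(H)=\{0\}$ on $H^1_0(K)$, and $\varphi^A\Psi$ is an element of this kernel, hence zero. The step I expect to require the most care is the boundary verification: one must justify that the trace of $\varphi^A\Psi$ on $\partial K$ vanishes even though $\varphi^A$ is a first-order differential operator that a priori might include components transverse to $\partial K$. This is handled through the identity $\varphi^A V_{\mathrm{B}}=0$, which forces $\varphi^A$ to be tangential to the level set $\partial K=\{V_{\mathrm{B}}=V_0\}$, so that applying it to any trace-zero function yields another trace-zero function.
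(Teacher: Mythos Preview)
Your proposal is correct and follows essentially the same route as the paper: use the commutation $[H,\varphi^A]=0$ and the hypothesis $\varphi^A\Theta=0$ to get $H(\varphi^A\Psi)=0$, invoke the tangentiality argument (from $\varphi^A V_{\mathrm{B}}=0$) to place $\varphi^A\Psi$ in $H^1_0(K)$, and then conclude from the triviality of the kernel. The only cosmetic difference is that the paper re-invokes the supersymmetric charges and Lemma~1 of \cite{PLB2019} at the last step, whereas you cite $\operatorname{Ker}(H)=\{0\}$ from the preceding theorem; since that theorem established the kernel triviality by exactly that argument, the two are equivalent.
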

\begin{proof}
 Set $\Xi=-H\Theta\in H^1(K)$ as above. Then 
\begin{equation}
    \varphi^A\Xi=H(\varphi^A \Psi)=0 
\end{equation}
with $\varphi^A\Psi\in H^1_0(K)\cap C^{\infty}(K)$. Hence,
for each index $A$, 
\begin{equation}
     Q\varphi^A\Psi=0 \quad \text{and} \quad Q^{\dagger}\varphi^A\Psi=0
\end{equation}
Thus, according to \cite[Lemma~1]{PLB2019}, we conclude that $\varphi^A\Psi=0$ in $K$.
\end{proof}

from this lemma we gather that the homogeneous problem
\begin{equation}\label{final}
\begin{aligned}
   H\Phi&=0 \qquad \text{in }K \\
   \varphi^A\Phi&=0 \qquad \text{in } K \\
\Phi&=\Theta \qquad \text{on } \partial K
\end{aligned}
\end{equation}
has a unique solution $\Phi\in H^1(K)\cap H^2_{\mathrm{loc}}(K)$
for any given $\Theta\in H^3(K)$ satisfying the constraint
$\varphi^A \Theta=0$. As we have already noticed there is no loss of generality by imposing the constraint on $\Theta$.

%%%%%%%%%%%%%%%%%%%%%%%%%%%%%%%%%%%%%%%%%%%%%%%%%%%%%%%%%%%%%%%%%%%%%%%%%%%%%%%%%%%%%%%%%%%%%%%%%%%%%%%%%%%%%%%%%%%%%%%%%%

\section{Conclusions}  \label{6}
In this work we fully examine the Hamiltonian of the regularized $SU(N)$  Supermembrane in eleven dimensions on an unbounded region. The region is naturally connected with the theory, as it is defined by the set $K=\{ X^{mA}: V_{\mathrm{B}}(X)<V_0\}$. These are the so-called valleys of the bosonic potential. Importantly, on these valleys, a) there are sub-regions extending to infinity where $V_{\mathrm{B}}$ vanishes, b) the potential is dominated by the fermionic sector and c) the full potential is unbounded from below. 

It is well known that, despite of a), the bosonic Hamiltonian defined on the unrestricted space has discrete spectrum with finite multiplicity. This is in contrast to the well-known fact, shown in \cite{dwln}, that  the spectrum of the supersymmetric Hamiltonian defined on the unrestricted space is continuous and comprises the whole segment $[0,\infty)$. One of our main contributions presently is the fact that, remarkably, the supersymmetric Hamiltonian restricted to the valleys and for wavefunctions vanishing on the boundary, has discrete spectrum with finite multiplicity. Notably, and in agreement with the established result, the wavefunctions constructed in \cite{dwln} for the proof of continuity of the spectrum do not (and must not) vanish on this boundary. Moreover, since these regions are preserved by the action of the $SU(N)$ constraint, the formulation of the present $SU(N)$ regularization restricted to the valleys, is both natural and well posed. 

Our findings suggest several puzzling avenues of further enquiry. Do the eigenvalues survive as embedded modes inside the continuous spectrum for the unrestricted space? In the context of (Super) Yang Mills theory, if  they survive and the slow mode regime captures relevant aspects of confinement, can these eigenvalues describe glueball boundstates and flux tubes connecting quarks when the theory is properly compactified  to $D=4$? Equivalently, in the regularized $SU(N)$ description of the Supermembrane compactified to four dimensions, do these eigenvalues capture aspects of QCD confinement?

We also establish the existence and uniqueness of the state which is annihilated by the Hamiltonian on the physical subspace determined  by the $SU(N)$ constraint, satisfying a prescribed boundary condition on $\partial K$. The proof of this fact has three main ingredients. i) The volume of $K$ is finite, subject to constraints on $d$, the number of transverse directions in the light cone gauge. ii) The fermionic potential satisfies a crucial estimate (see Section~\ref{4}), which renders a coercive Hamiltonian. iii) The embedding $H^1(K)\subset L^2(K)$ is compact according to know results in \cite{PLB2019}. All this is in agreement with the previous findings of \cite{Lundholm, austing1,austing2}. 

Although we consider explicitly the supersymmetric Hamiltonian of the $SU(N)$ regularized  $D=11$ Supermembrane, the estimates we found for the fermionic potential rely only on the linear dependence of the bosonic coordinates. We therefore expect that the present findings could be extrapolated to the $SU(N)$ Supermembrane on the other admissible dimensions, provided the restrictions of Lemma~\ref{lemfinitevolK} are fulfilled.  The $SU(N)$ Matrix Models with $N\ge 4$ spacetime and dimensions $D\ge 5$ contain a massless ground state, consequently our proof does not include the dimension $D=4$. However in the context of Super Yang Mills theories, for  $SU(N)$ gauge  groups with  $N=2,3$, the spacetime dimensions must be with $D\ge 7,6$ respectively. If (Super) Yang Mills theories in the slow regime are a good indicator of the fully-fledged theory confinement behaviour, these results could suggest the need of extra dimensions. Our claimed results are valid for a Supermembrane Theory  (large $N$) formulated in $D=5, 7,11$ only.

By Domain Monotonicity, many of our claims extend to a formulation of the theory on any reasonably regular region inside the valleys (with suitable boundary and $SU(N)$ constraints). For instance star-shaped domains for large enough $V_0$. Therefore an asymptotic analysis of the groundstate of the regularized $SU(N)$ supermembrane is perhaps possible, by considering a sequence of Dirichlet problems on regions taking $V_0\to \infty.$ 
We have shown the existence and uniqueness of the solution of the homogeneous Dirichlet problem and hence the existence and uniqueness  of the state annhiliated by the supersymmetric Hamiltonian. However, this is generically not annihilated by the supersymmetric charges, as it is only for a particular boundary condition that the corresponding state may be annihilated. 

The unique state that we have determined by solving the Dirichlet problem, is the minimizer of the norm defined in terms of the supersymmetric charges, namely $\vert\vert\varphi\vert\vert^2_Q\equiv (Q\varphi,Q\varphi)_{L^2(K) }+(Q^{\dag}\varphi, Q^{\dag}\varphi)_{L^2(K) },$
 for a given boundary condition on $\partial K$.  Perhaps it would be possible to pursue further studies of the (Super) Yang Mills Theory in the slow mode regime, when it is confined to this tubular/star-shaped region.

\section*{Acknowledgements}  AR and MPGM were partially supported by Projects Fondecyt 1161192 (Chile). AR was partially supported by MINEDUC-UA project code ANT1855. 
%%%%%%%%%%%%%%%%%%%%%%%%%%%%%%%%%%%%%%%%%%%%%%%%%%%%%%%%%%%%%%%%

\appendix

\section{Measure of the \texorpdfstring{$su(2)$}{}  sectors in \texorpdfstring{$su(N)$}{}} \label{A}
In this first appendix we analyze the Lebesgue measure of diagonal matrices that belong to the Cartan subalgebra of the $su(N)$ algebra (describing the longitudinal directions  along the valleys), when we consider Matrix Models for different rank of the $su(N)$ gauge groups.

The potential $V_{\mathrm{B}}(X)$ is invariant under conjugation by $U\in su(N)$,
\begin{equation}\label{eqn II}
X^m\to UX^mU^{-1}.
    \end{equation}

We can diagonalize one of the $X^m$ matrices, say $\widehat{X}$. If the eigenvalues of $\widehat{X}$ are all different, then $U$ is determined up to a diagonal matrix acting on the right to $U$. If there are equal eigenvalues, then $U$ has additionally non-diagonal terms undetermined and we can fix some of the components of another matrix $X$. This procedure will be explained in detail in due course.

We characterize the measure $\widehat{X}$ of the  matrices of the Cartan subalgebra with different rank, in order to obtain inductively the  expression for arbitrary $N$.
Associated to the $su(2)$ matrix $\widehat{X}$
\begin{equation}\label{eqn 0}
 \widehat{X}\equiv 
    \begin{pmatrix}
   ia & m\\
   -\overline{m}& -ia
    \end{pmatrix}
\end{equation}
there is the following measure $\mathcal{M}_2(\widehat{X})=\rho^2\ \mathrm{d}\rho \, \mathrm{d}\Omega$, with $\rho^2=a^2+m\overline{m}$.  We can fix $m=0$ by a selection of $U$. Hence the measure can be expressed as $\mathcal{M}_2(\widehat{X})=a^2\ \mathrm{d}\vert a\vert \mathrm{d}\Omega.$\newline
 For a $su(3)$ matrix, there are three $su(2)$ sectors
 
 \begin{equation}\begin{aligned}\label{eqn 1}
   & \begin{pmatrix}
   \frac{i}{3}\alpha & m_{12}& 0\\
   -\overline{m}_{12}& -\frac{i}{3}\alpha &0\\ 0&0&0
    \end{pmatrix}+
    \begin{pmatrix}
   -\frac{i}{3}\beta & 0& m_{13}\\
  0&0&0\\ -\overline{m}_{13}&0&\frac{i}{3}\beta
    \end{pmatrix}+ \begin{pmatrix}
  0&0&0\\
  0&-\frac{i}{3}\gamma & m_{23}\\ 0&-\overline{m}_{23}&\frac{i}{3}\gamma
    \end{pmatrix}=\\ &
     \begin{pmatrix}
  \frac{i}{3}(\alpha-\beta)& m_{12}& m_{13}\\
  -\overline{m}_{12}& \frac{-i}{3}(\alpha+\gamma)& m_{23}\\ -\overline{m}_{13}&-\overline{m}_{23}&\frac{i}{3}(\beta+\gamma)
    \end{pmatrix}
     \end{aligned}
\end{equation}
 We may perform a linear change of variables from the original $\widehat{X}^A$ coordinates to the new ones $\alpha, \beta,\gamma, m_{ij}$. Each $su(2)$ sector contributes to the measure as in \eqref{eqn 0}, we then have the following measure associated to $\widehat{X}$,
 \begin{equation}\mathcal{M}_3(\widehat{X})=\alpha^2\beta^2\gamma^2 \ \mathrm{d}\vert\alpha\vert \, \mathrm{d}\vert\beta\vert \, \mathrm{d}\vert\gamma\vert \delta(\gamma-\alpha-\beta).\end{equation}
 
 We can shift $\alpha$ and $\beta$ by the same amount $\lambda$ while $\gamma$ by $-\lambda$ and \eqref{eqn 1} remains invariant. We can always fix this invariance by taking $\gamma-\alpha-\beta=0$, in agreement with the total number of degrees of freedom of a $su(3)$ matrix.
The measure can be expressed in terms of the diagonal components of the $su(3)$
 matrix, $a_1\equiv \frac{1}{3}(\alpha-\beta)$, $a_2\equiv -\frac{1}{3}(\alpha+\gamma)=-\frac{1}{3}(2\alpha+\beta)$, and $a_3\equiv \frac{1}{3}(\beta+\gamma).$ In fact, by defining $a_{ij}= a_i-a_j$ \begin{equation}a_{12}=\alpha,\quad a_{13}=\beta,\quad a_{32}=\alpha+\beta.\end{equation}
 The measure can be then re-expressed as 
 \begin{equation}\mathcal{M}_3(\widehat{X})=\vert a_{12}\vert^2\vert a_{13}\vert^2\vert a_{23}\vert^2 \mathrm{d}\vert a_{12}\vert \, \mathrm{d}\vert a_{13}\vert.\end{equation}

In general for $su(N)$ there are $\frac{N(N-1)}{2}$ $u(2)$ sectors. Each sector is defined by two diagonal components, say $b_i,b_j$ and the non-diagonal component $z_{ij}$. For example, for $su(4)$ there are six sectors and the measure can be expressed as
\begin{equation}\mathcal{M}_4(\widehat{X})=\vert a_{12}\vert^2\vert a_{13}\vert^2\vert\vert a_{14}\vert^2 \vert a_{23}\vert^2\vert a_{24}\vert^2\vert a_{34}\vert^2 \mathrm{d}\vert a_{12}\vert\, \mathrm{d}\vert a_{13}\vert\, \mathrm{d}\vert a_{14}\vert.\end{equation}
The main point to write the measure in this form, is that all the factors in the bracket will be cancelled from a contribution during the integration procedure in the evaluation of $\mathrm{Vol}(K)$. The above expression can be generalized to $su(N)$ in a straightforward way.

%%%%%%%%%%%%%%%%%%%%%%%%%%%%%%%%%%%%%%%%%%%%%%%%%%%%%%%%%%%%%%%%%%%%%%%%%%%%%%%%%%%%%%%%%%

\section{Bounds on the measure  \texorpdfstring{$\mathrm{Vol}(K)$}{}} \label{B}
We now consider $\mathrm{Vol}(K)$ in the region where all the eigenvalues of $\widehat{X}$ are different. We denote by $\mathcal{N}$  the region where  all the differences satisfy $\vert a_{ij}\vert >\epsilon$ for all $i\ne j$. We find the bounds for three cases: $su(2)$, $su(3)$, $su(4)$ and finally the general expression for $su(N)$.

\subsection{The \texorpdfstring{$su(2)$}{} case} \label{B.1} In this case $a_2=-a_1$, then $a_1-a_2=2a>\epsilon$. The expression of the potential is 
\begin{equation}\label{eqn 2}
    \frac{1}{2}V_{\mathrm{B}}= 4(a^2+b^2)\vert\vert z\vert\vert^2-4\vert b\cdot z\vert^2 +\vert\vert z\vert\vert^4-(z\cdot z)(\overline{z}\cdot \overline{z}).
\end{equation}
It follows from \eqref{eqn 2} that $\vert z\vert^2<\frac{V_0}{\epsilon^2}$.
Define
\begin{equation}
    \widetilde{V}_{\mathrm{B}}= 8(a^2+\vert b\vert^2)\vert\vert z\vert\vert^2-\vert b\cdot z\vert^2.
\end{equation}
Since $ \vert\vert z\vert\vert^4-(z\cdot z)(\overline{z}\cdot \overline{z})=4[\operatorname{Re}(z)^2(\operatorname{Im}(z)^2- (\operatorname{Re}(z)\cdot \operatorname{Im}(z))^2]\ge 0$ we have 
\begin{equation}\frac{1}{2}\widetilde{V}_{\mathrm{B}}(X) \le \frac{1}{2}V_{\mathrm{B}}(X)\end{equation}
for all $a,b,z.$

We decompose $z=\lambda b+z^{\perp}$, where $b\cdot z^{\perp}=0.$ Then 
\begin{equation}\frac{1}{8}\widetilde{V}_{\mathrm{B}}=a^2\vert \lambda\vert^2+(a^2+\vert\vert b\vert\vert^2)\vert\vert z^{\perp}\vert\vert^2.\end{equation}

The set $\widetilde{K}\equiv \{ X: \widetilde{V}_{\mathrm{B}}(X)<V_0\}$, for $a$ and $b$ fixed, is the interior of an ellipsoid $E$ described by the coordinates $\{\operatorname{Re}(\lambda),\operatorname{Im}(\lambda), \operatorname{Re}(z^{\perp}), \operatorname{Im}(z^{\perp})\}.$ Consequently 
\begin{equation}
\begin{aligned} \label{eqn 7}
\mathrm{Vol}(\widetilde{K}\cap \mathcal{N})=& \widetilde{C}\int_{\widetilde{K}\cap \mathcal{N}} \mathrm{d} a (d\vert\vert b\vert\vert)  \left[a^2\vert\vert b\vert\vert^{d-2} (a^2(a^2+\vert\vert b\vert\vert ^2)^{d-2})^{-1}\right]\\&
\le C \int_{\widetilde{K}\cap \mathcal{N}} \mathrm{d}\rho \rho^{d-1} (\rho^{2(d-2)})^{-1}
\end{aligned}
\end{equation}
where $\rho^2=a^2+\vert\vert b\vert\vert^2$ and $\mathrm{d}a \ \mathrm{d}\vert\vert b\vert= \rho \ \mathrm{d}\rho\, \mathrm{d}\varphi$ and we used $\vert\vert b\vert\vert^{d-2}\le\rho^{d-2}.$  The factor $(a^2(a^2+\vert\vert b\vert\vert ^2)^{d-2})^{-1}$ corresponds to the volume of $E$. In $\widetilde{K}\cap \mathcal{N}$, $\rho^2>\frac{\epsilon^2}{4}$
 since $2\vert a\vert>\epsilon$. Consequently, the above integral is convergent, provided
 \begin{equation}\label{eqn 20}
     2(d-2)-(d-1)>1.
 \end{equation}
We therefore conclude that $\mathrm{Vol}(\widetilde{K}\cap \mathcal{N})$ is finite, and hence $\mathrm{Vol}(K\cap \mathcal{N})$ is also finite for $d\ge 5$.

In the context of Supermembrane theory, $d$ parametrizes transverse dimensions. Hence the measure of the valleys of the $su(2)$ bosonic potential are finite for $D\ge 7$. Recall that Supermembrane Theory is consistently defined in $4,5,7$ and $11$ dimensions.  The factor $a^2$ in the measure of the integral is cancelled by a factor $a^{-2}$ arising from the volume  of $E$, associated to the coordinate $\vert\lambda\vert$. This cancellation occurs also for the $su(N)$ potential, as we will see later on.
 %%%%%%%%%%%%%%%%%%%%%%%%%%%%%%%%%%%%%%%%%%%%%%%%%%%%%%%%%%%%%%%%%%%%%%%%%%%%%%%%%%%%%%%%%%%%%%%%%%%%%%%%%%%%%%%%%%%%%%%%%%%%%%%%%%%%%%%%%%%%%%%%%%%%%%%%%%%%%%%%%%%%%%%%%%%%%%%%%%%%%%%%%%%%%%%%%%%%%%%%%%%%%%%%%%%%%%%%%%%%%%%%%%%%%%%%%%%%%%%%%%%%%%%%%%%%
 
 \subsection{The \texorpdfstring{$su(3)$}{}  case}  \label{B.2}
The explicit expression for the potential is 
 \begin{equation}
 \begin{aligned}\label{eqn 3}
     V_{\mathrm{B}}=&\frac{1}{2}\sum_{m,n}(1,1)^{mn}\overline{(1,1)}^{mn}+(2,2)^{mn}\overline{(2,2)}^{mn}+(3,3)^{mn}\overline{(3,3)}^{mn}\\&+\sum_{m,n}(1,3)^{mn}\overline{(1,3)}^{mn}+(2,3)^{mn}\overline{(2,3)}^{mn}+(1,2)^{mn}\overline{(1,2)}^{mn}
     \end{aligned}
 \end{equation}
where $(i,j)^{mn}$ denotes the $i,j$ component of the matrix $[X^m,X^n]$. The diagonal components $(i,i)^{mn}, i=1,2,3$, depend solely on the non-diagonal components of $X^m$ and $X^n$. The contribution of the matrix $\widehat{X}$ to \eqref{eqn 3} is only to the non-diagonal terms and it is 
\begin{equation}2\vert a_{12}\vert^2 \vert\vert z_{12}\vert\vert^2+2\vert a_{13}\vert^2 \vert\vert z_{13}\vert\vert^2+ 2\vert a_{23}\vert^2\vert\vert z_{23}\vert\vert^2.\end{equation}
There is no contribution of $\widehat{X}$ to the diagonal terms $(i,i)$ in the expression of the potential since $\widehat{X}$ is diagonal. 
In the set $K\cap \mathcal{N}$, $\vert\vert z_{ij}\vert\vert$ are bounded, in fact \eqref{eqn 3} implies
\begin{equation}\label{22}
\vert\vert z_{ij}\vert\vert^2<\frac{V_0}{\epsilon^2}, \quad i,j=1,2,3.
\end{equation}

In the case of the $su(2)$ algebra the sum of diagonal terms in \eqref{eqn 3} correspond to the term in the last two terms on the right in \eqref{eqn 2}.

The non-diagonal terms of $V_{\mathrm{B}}$ can be re-arranged in terms of the three $u(2)$ sectors 
\begin{equation}V_{12}+V_{13}+V_{23},\end{equation} where
\begin{equation}
\begin{aligned}
V_{ij}=&\vert a_{ij}\vert^2\vert\vert z_{ij}\vert\vert^2+ \vert\vert b_{ij}\vert\vert^2 \vert\vert z_{ij}\vert\vert^2-\vert (b_{ij})\cdot z_{ij}\vert^2\\& -i[(b_{ij})\cdot \overline{z}_{ik}](z_{ij}\cdot z_{jk})+i[(b_{ij})\cdot {z}_{jk}](z_{ij}\cdot \overline{z}_{ik})\\& +i[(b_{ij})\cdot {z}_{ik}](\overline{z}_{ij}\cdot \overline{z}_{jk})-i[(b_{ij})\cdot \overline{z}_{jk}](\overline{z}_{ij}\cdot z_{ik})\\ & +\vert\vert z_{ik}\vert\vert^2\vert\vert z_{jk}\vert\vert^2-\vert z_{ik}\cdot z_{jk}\vert^2.
\end{aligned}
\end{equation}
for $b_{ij}\equiv b_i-b_j$, and  $i\ne j\ne k$ and $i,j,k =1,2,3$. We define 
\begin{equation}
\begin{aligned}\label{eqn 8}
\widetilde{V}_{\mathrm{B}}\equiv & V_{12}+V_{13}+V_{23}- (\vert\vert z_{13}\vert\vert^2\vert\vert z_{23}\vert\vert^2-\vert z_{13}\cdot {z}_{23}\vert^2)\\&
(\vert\vert z_{12}\vert\vert^2\vert\vert z_{23}\vert\vert^2-\vert z_{12}\cdot \overline{z}_{23}\vert^2)-(\vert\vert z_{12}\vert\vert^2\vert\vert z_{13}\vert\vert^2-\vert z_{12}\cdot {z}_{13}\vert^2).
\end{aligned}
\end{equation}
Then 
 \begin{equation}
     \widetilde{V}_{\mathrm{B}}<V_{\mathrm{B}}
 \end{equation}
for all $a_i,b_j,z_{kl}$ since the terms with brackets in \eqref{eqn 8} are positive.
We then define 
\begin{equation}\widetilde{K}=\{ a_i,b_j,z_{kl}: \widetilde{V}_{\mathrm{B}}<V_0\}.\end{equation}
 On each $u(2)$ sector we can shift the corresponding $z_{ij}$, in order to simplify the expression for $V_{ij}$. 

 For the three $u(2)$ sectors,
 \begin{equation}
     z_{ij}=\lambda_i\frac{b_{ij}}{\vert\vert b_{ij}\vert\vert}+z_{ij}^{\perp}, \quad b_{ij}\cdot z_{ij}^{\perp}=0.
 \end{equation}
Define
 \begin{equation}
     \widetilde{z}_{ij}=z_{ij}^{\perp}+\frac{i}{\rho_{ij}^2}\{[b_{ij}\cdot z_{ik}]\overline{z}_{jk}-[b_{ij}\cdot \overline{z}_{jk}]{z}_{ik}\}
 \end{equation}
where $\rho^2_{ij}=\vert a_{ij}\vert^2+\vert\vert b_{ij}\vert\vert^2.$ 
Then, \begin{equation}
 \begin{aligned}
 \widetilde{V}_{ij}=& V_{ij}-(\vert\vert z_{ik})\vert\vert^2\vert\vert z_{jk}\vert\vert^2-\vert z_{ik}\cdot z_{jk}\vert^2)\\ & =\vert a_{ij}\vert^2\vert \lambda_i\vert^2+\rho_{ij}^2
\vert\vert \widetilde{z}_{ij}^{\perp}\vert\vert^2-A_{ij}^2 \end{aligned}
 \end{equation}
 where \begin{equation}A_{ij}^2= \frac{1}{\rho^2_{ij}}\vert\vert (b_{ij}\cdot z_{ik})\overline{z}_{jk}-(b_{ij}\cdot \overline{z}_{jk})z_{ik}\vert\vert^2\end{equation}
 is bounded from above
 \begin{equation}
     A_{ij}^2<\left(\frac{2V_0}{\epsilon^2}\right)^2.
 \end{equation}
 Consequently, $A_{ij}^2+A_{ik}^2+A_{jk}^2$ is bounded by $12\left(\frac{V_0}{\epsilon^2}\right)^2.$

The measure of $\widetilde{K}\cap \mathcal{N}$ is bounded by the measure of the set $\widehat{K}\cap \mathcal{N}$, with 
\begin{equation}\label{eqn 27b} \widehat{K}= \left\{a,b,\lambda,\widetilde{z}: \widehat{V}_{\mathrm{B}}<V_0+12\left(\frac{V_0}{\epsilon^2}\right)^2\right\}\end{equation}
where
\begin{equation}
\begin{aligned}
    \widehat{V}_{\mathrm{B}} &\equiv \vert a_{12}\vert^2\lambda_1^2+ \vert a_{13}\vert^2\hat{\lambda}_1^2+ \vert a_{23}\vert^2\hat{\hat{\lambda}}_1^2+\\
    &+\rho_{12}^2\vert\vert \widetilde{z}_{12}^{\perp}\vert\vert^2+
    +\rho_{13}^2\vert\vert \widetilde{z}_{13}^{\perp}\vert\vert^2+
    +\rho_{23}^2\vert\vert \widetilde{z}_{23}^{\perp}\vert\vert^2.
    \end{aligned}
\end{equation}

For $su(N)$, the factor $12$ in \eqref{eqn 27b} changes to $2N(N-1)$.
As in the $su(2)$ case, given $a_{ij}$, $\rho_{ij}\equiv \vert a_{ij}\vert^2+\vert\vert b_{ij}\vert\vert^2$, the set of points satisfying 
$\widehat{V}_{\mathrm{B}}<\widehat{V}_0= V_0+12\left(\frac{V_0}{\epsilon^2}\right)^2$ coincides with the interior of an ellipsoid $E$ determined by radii  \begin{equation}
\widehat{V}_0\left\{\frac{1}{\vert a_{12}\vert}, \frac{1}{\vert a_{13}\vert},\frac{1}{\vert a_{23}\vert},\frac{1}{\rho_{12}},\frac{1}{\rho_{13}},\frac{1}{\rho_{23}}\right\}.\end{equation} 
Its volume is then
\begin{equation}\label{eqn III}
    \mathrm{Vol}(E)=\Pi_{i<j}\frac{C}{\vert a_{ij}\vert^2\cdot \rho_{ij}^{2(d-2)}}
\end{equation}
 We notice that in $\widehat{K}\cap \mathcal{N}$ each factor in the denominator of \eqref{eqn III} is bounded away from zero since $\vert a_{ij}\vert >\epsilon, \quad i,j=1,2,3$ and $i\ne j$.
 
 Using the measure of $\widehat{X}$, denoted by  $\mathcal{M}_3(\widehat{X})$,  and the product of the measure of $b_{ij}$, denoted as  $\mathcal{M}(b_{ij})$ with $i\ne j$, we get
  \begin{equation}\Pi_{i<j}\mathcal{M}(b_{ij})=\vert\vert b_{12}\vert\vert^{d-2}\vert\vert b_{13}\vert\vert^{d-2}\mathrm{d}\vert\vert b_{12}\vert \vert \mathrm{d}\vert\vert b_{13}\vert\vert \mathrm{d}\Omega_{12}\mathrm{d}\Omega_{13}.\end{equation}
 From \eqref{eqn III} we obtain that the measure of $\widehat{K}\cap \mathcal{N}$ ,  is given by 
 \begin{equation}
    \begin{aligned} \label{eqn IV}
     \mathcal{M}_3&(\widehat{K}\cap \mathcal{N})= \int \mathcal{M}_3(\widehat{X})\Pi_{i\ne j}\mathcal{M}(b_{ij}) \mathrm{Vol} (E)\\ &\le C\int \mathrm{d}\vert a_{12}\vert \mathrm{d}\vert a_{13}\vert \mathrm{d}\vert\vert b_{12}\vert\vert \mathrm{d}\vert\vert b_{13}\vert\vert \mathrm{d}\Omega_{12}\mathrm{d}\Omega_{13}\left(\rho_{12}^{d-2}\rho_{13}^{d-2}\rho_{23}^{2(d-2)}\right)^{-1}.
     \end{aligned}
 \end{equation}
 
  In order to evaluate this integral, consider the $d$-dimensional vectors $r_i\equiv (a_i,b_i)$ with $i=1,2,3$, satisfying $r_1+r_2+r_3=0$. Observe that
  $\rho_{ij}=\vert\vert r_i-r_j\vert\vert$.
  Define
  $u_{12}\equiv r_1-r_2$, and $u_{13}\equiv \frac{r_1-r_3}{\vert\vert r_1-r_2\vert\vert}.$ Then
 \begin{equation}
     \rho_{12}=\vert\vert u_{12}\vert\vert,\quad \rho_{13}=\vert\vert u_{13}\vert\vert \cdot\vert\vert u_{12}\vert\vert,\quad \rho_{23}=\vert\vert u_{13}-\frac{u_{12}}{\vert\vert u_{12}\vert\vert}\vert\vert \cdot\vert\vert u_{12}\vert\vert.
 \end{equation}
The integral \eqref{eqn IV} can then be expressed as an integral in $u_{12}$ and $u_{13}$, 
\begin{equation}
    \begin{aligned} \label{eqn V}
      \mathcal{M}_3(\widehat{K}\cap \mathcal{N})&=
     \int \vert\vert u_{12}\vert\vert^3 d\vert\vert u_{12}\vert\vert d\omega_{12}\vert\vert u_{13}\vert\vert d\vert\vert u_{13}\vert\vert d\omega_{13}\cdot\\ &
     (\vert\vert u_{12}\vert\vert^{4(d-2)}\cdot \vert\vert u_{13}\vert\vert^{d-2}\cdot \vert\vert u_{13}-\frac{u_{12}}{\vert\vert u_{12}\vert\vert}\vert\vert^{2(d-2)})^{-1}.
     \end{aligned}
 \end{equation}
 
 As noted before, since we are integrating on $\mathcal{N}$, each factor in the denominator is bounded away from zero. 
  The third power arises from the following expressions
  \begin{equation}
      \begin{aligned}
     & d\vert a_{12}\vert d \vert\vert b_{12}\vert\vert \to \vert\vert u_{12}\vert\vert d \vert\vert u_{12}\vert\vert d\varphi_{12}\\ &
     d\vert a_{13}\vert d \vert\vert b_{13}\vert\vert\to \rho_{13}d\rho_{13}d\varphi_{13}= \vert\vert u_{12}\vert\vert^2 \vert\vert u_{13}\vert\vert d \vert\vert u_{13}\vert\vert d\varphi_{13}.
      \end{aligned}
  \end{equation}
 The term on the right hand side of \eqref{eqn V} factorizes into two integrals,
 \begin{equation}\label{eqn VI}
  I_1^{(3)}=\int d\vert\vert u_{12}\vert\vert \vert\vert u_{12}\vert\vert^{-4(d-2)+3}   
 \end{equation}
 and 
 \begin{equation}\label{eqn VII}
 I_2^{(3)}= \int d\vert\vert u_{13}\vert\vert \vert\vert u_{13}\vert\vert^{-(d-2)+1} \vert\vert u_{13}- \frac{u_{12}}{\vert\vert u_{12}\vert\vert}\vert\vert^{-2(d-2)} d\omega_{12}\omega_{13}. 
 \end{equation}
 So, in order to have a convergent integral, each factor must be finite, and we then require
 \begin{equation}\label{eqn VIII}
     4(d-2)-3>1 \qquad \Rightarrow \qquad d>3\Rightarrow \qquad d\ge 4,
 \end{equation}
and
  \begin{equation}\label{eqn IX}
     3(d-2)-1>1\qquad \Rightarrow\qquad d>2+\frac{2}{3}\qquad\Rightarrow\qquad d\ge 3,
 \end{equation}
from equations \eqref{eqn VI} and \eqref{eqn VII}, respectively. If $d\ge 4$, then the volume of the valley for the $su(3)$ algebra is finite. The restriction arising from the integration on $\vert\vert u_{12}\vert\vert$ in \eqref{eqn VIII} is stronger than \eqref{eqn IX}, because of the factor $\vert\vert u_{12}\vert\vert^3$. This also occurs for the $su(N)$ case.
%%%%%%%%%%%%%%%%%%%%%%%%%%%%%%%%%%%%%%%%%%%%%%%%%%%%%%%%%%%%%%%%%%%%%%%%%%%%%%%%%%%%%%
%%%%%%%%%%%%%%%%%%%%%%%%%%%%%%%%%%%%%%%%%
\subsection{The \texorpdfstring{$su(4)$}{} case} \label{A4}
Following the same procedure as above, we obtain the same bound \eqref{22} for all the non-diagonal components of the matrices $X^m$. We end up with the integral 
\begin{equation} \begin{aligned}\label{eqn 27bb}
\mathcal{M}_4(\widehat{K}\cap \mathcal{N})=\int & d\vert a_{12}\vert d\vert a_{13}\vert d\vert a_{14}\vert d\vert \vert b_ {12}\vert\vert d\vert \vert b_{13}\vert\vert d\vert \vert b_{14}\vert\vert
d\Omega_{12} d\Omega_{13} d\Omega_{14}\mathbb{R}_4\\
&\mathbb{R}_4=(\rho_{12}^{d-2}\rho_{13}^{d-2}\rho_{14}^{d-2}\rho_{23}^{2(d-2)}\rho_{24}^{2(d-2)}\rho_{34}^{2(d-2)})^{-1}
\end{aligned}\end{equation}

Set, as before, the variables $r_i$ for $i=1,\dots,4$ satisfying  $\sum_{i=1}^4 r_i=0$. Then
\begin{equation}\rho_{ij}=\vert\vert r_i-r_j\vert\vert=\vert\vert (r_i-r_1)+(r_1-r_j)\vert\vert.\end{equation}
For
\begin{equation} u_{12}=r_1-r_2, \quad u_{13}=\frac{r_1-r_3}{\vert\vert r_1-r_2\vert\vert} \quad \text{and} \quad u_{14}=\frac{r_1-r_4}{\vert\vert r_1-r_2\vert\vert}\end{equation}
it follows that, 
\begin{equation}
\begin{gathered}
\rho_{12}=\vert\vert u_{12}\vert\vert, \quad\rho_{13}=\vert\vert u_{13}\vert\vert\cdot \vert\vert u_{12}\vert\vert,\quad \rho_{14}=\vert\vert u_{14}\vert\vert\cdot \vert\vert u_{12}\vert\vert,\\
\rho_{23}=\vert\vert u_{13}-\frac{u_{12}}{\vert\vert u_{12}\vert\vert}\vert\vert \cdot\vert\vert u_{12}\vert\vert,\quad \rho_{24}=\vert\vert u_{13}-\frac{u_{12}}{\vert\vert u_{12}\vert\vert}\vert\vert \cdot\vert\vert u_{12}\vert\vert, \\
 \rho_{34}=\vert\vert u_{14}- u_{13} \vert\vert \cdot\vert\vert u_{12}\vert\vert.
\end{gathered}\end{equation}

Although the integral \eqref{eqn 27b} can be performed without using the following bound
\begin{equation} \rho_{34}=\vert\vert r_3-r_4\vert\vert>\epsilon,\quad \rho^{-2(d-2)}_{34}<\epsilon^{-2(d-2)},\quad\textrm{for} \quad d>2,\end{equation}
this not change the restriction on integral dimensions $d.$ We may then dismiss the factor $\rho^{-2(d-2)}_{34}$ since \eqref{eqn 27b} is bounded by an integral which factorizes into $\mathcal{M}_4(\widehat{K}\cap \mathcal{N})=I_1^4\cdot I_2^4$ for
\begin{equation} \label{eqn XI}
I_1^4=\int d\vert\vert u_{12}\vert\vert \vert\vert u_{12}\vert\vert^5 (\vert\vert u_{12}\vert\vert^{7[d-2]})^{-1}
\end{equation}
and
\begin{equation} \label{eqn XII} I_2^4=\epsilon^{-2(d-2)}\int  d\vert\vert u_{13}\vert\vert d\vert\vert u_{14}\vert\vert  d\Omega_{12} d\Omega_{13} d\Omega_{14}\mathbb{A}\end{equation}
with \begin{align*}\mathbb{A}=\vert\vert u_{13}&\vert\vert^{1-(d-2)} \vert\vert \vert\vert u_{14}\vert\vert^{1-2(d-2)}\\&\vert\vert u_{13}-\frac{u_{12}}{\vert\vert u_{12}\vert\vert}\vert\vert^{-2(d-2)} \vert\vert u_{14}-\frac{u_{12}}{\vert\vert u_{12}\vert\vert}\vert\vert^{-2(d-2)}.\end{align*}
If \begin{equation} \label{eqn XIII} 7(d-2)-5>1\quad\Rightarrow\quad d>2+\frac{6}{7}\quad\Rightarrow\quad d\ge 3 \end{equation}
and \begin{equation} \label{eqn XIV}3(d-2)-1>1\quad\Rightarrow\quad d>2+\frac{2}{3}\quad \Rightarrow\quad d\ge 3,\end{equation}
the integral \eqref{eqn 27b} is convergent. If $d\ge 3$ the measure of the valley for the $su(4)$ algebra is finite. There are no divergences arising for factors going to zero, since we are working in the region $\mathcal{N}$.

%%%%%%%%%%%%%%%%%%%%%%%%%%%%%%%%%%%%%%%%%%%%%%%%%%%%%%%%%%%%%%%%%%%%%%%%%%%%%%%%%%
\subsection{Bounds in the \texorpdfstring{$su(N)$}{} case} \label{B4}  We obtain the same bound \eqref{22} for all the non-diagonal components of the matrices $X^m$. The result, concerning the finiteness of the measure, follows directly by dismissing all the terms involving $\rho_{ij}$ for all $i\ge 3$ and all $j>i$, because all of them are bounded by powers of $\epsilon$. The integral representing the measure of $K\cap\mathcal{N}$ is then bounded by an integral which factorizes into two integrals. An integral on $\vert\vert u_{12}\vert\vert$, with positive powers $(N-1)+(N-2)=2N-3$ arising from 
 \begin{equation}d \vert a_{1i}\vert d\vert\vert b_{1i}\vert\vert\to \rho_{1i}d\rho_{1i}d\varphi_i\end{equation}
 with 
\begin{equation}
      \rho_{1i}\to\begin{cases}
      \vert\vert u_{12}\vert\vert & i=2 \\
      \vert\vert u_{1i}\vert\vert \cdot \vert\vert u_{12}\vert\vert  & i>2.
      \end{cases}
  \end{equation}
This contributes with $(N-1)$ to the exponent of $\vert\vert u_{12}\vert\vert$. Followed by 
\begin{equation}d\rho_{1i}=\vert\vert u_{12}\vert\vert d\vert\vert u_{1i}\vert\vert\end{equation} for $i>2$, which contributes with $(N-2)$ to the exponent. Finally, $(N-1)(d-2)$ arising from the measure factors $\vert\vert b_1-b_i\vert\vert^{d-2}$. The contribution to the negative powers arises from the integrals on $z_{ij}$, $i<j$ and the further change of variables $\rho_{ij}\to \vert\vert u_{ij}\vert\vert$. Since we are only considering the pairs,
$12,\dots, 1N,23,\dots,2N$ factors we have a power $-[(N-1)+(N-2)]2(d-2)$. 

The convergence of the integral is ensured for 
\begin{equation}[(N-1)+(N-2)]2(d-2)-(N-1)(d-2)-(N-1)-(N-2)>1.\end{equation}
That is
 \begin{equation}d>2+\frac{2(N-1)}{3(N-1)-2}.\end{equation}
The term
\begin{equation}\frac{2(N-1)}{3(N-1)-2}= \begin{cases}
2& N=2\\
1& N=3\\
<1& N\ge 4.
\end{cases}
\end{equation}
We thus recover the previous results for $su(2)$ in \eqref{eqn 20}, $su(3)$ in \eqref{eqn VIII}, $su(4)$ in \eqref{eqn XIII} and obtain the general result. \emph{The integral is finite if $d\ge 3$, for $su(N)$ whenever $N\ge 4$.}

The second integral associated to the measure of $\mathcal{M}_N(\widehat{K}\cap \mathcal{N})$ is bounded by the integral
\begin{equation}
\begin{aligned}
   I= &C\int d\vert\vert u_{13}\vert\vert\dots d\vert\vert u_{1N}\vert\vert d\Omega_{12}\dots d\Omega_{1N} \mathbb{B}\quad\textrm{and,}\\
   &\mathbb{B}= \Pi_{i=3}^{N}\vert\vert u_{1i}\vert\vert^{1-(d-2)}\vert\vert u_{1i}- \frac{u_{12}}{\vert\vert u_{12\vert\vert}}\vert\vert^{-2(d-2)}
\end{aligned}
\end{equation}
which is convergent provided 
\begin{equation}3(d-2)-1>1\to d\ge 3.\end{equation}
This occurs for $N=3$ and $N=4$. We then conclude that, if $d\ge 3$, the measure of $\mathcal{M}_N(\widehat{K}\cap \mathcal{N})$ is finite for the algebra $su(N)$ whenever $N\ge 4$. From the viewpoint of the Supermembrane Theory taking $N$ to infinity, the restriction  $D\ge 5$ in the spacetime dimension ensures that the volume of the region defined in terms of the  bosonic potential  along the  valleys is finite. 

%%%%%%%%%%%%%%%%%%%%%%%%%%%%%%%%%%%%%%%%%%%%%%%%%%%%%%%%%%%%%%%%%%%%%%%%%%%%%%%%%%%%%%%%%%

\section{Gauge transformations}  \label{C}
Given $X\in u(N)$, we consider the gauge transformations
\begin{equation}\label{A1}
    X^{'}=U^{-1}XU,\quad U\in SU(N).
\end{equation}
Under \eqref{A1}, $\operatorname{Tr} X$ and $\operatorname{Tr} X^{\dag}X$ remain invariant. In particular if $X\in su(N)$, using the notation introduced in Section~\ref{3},
\begin{equation}\label{A2}
N \operatorname{Tr} X^{\dag}X=N \operatorname{Tr} X^{\dag}X-(\operatorname{Tr} X)(\operatorname{Tr}X^{\dag})=\sum_I[(b_{ij})^2+2N\vert Z_{ij}\vert^2],
\end{equation}
is also invariant under \eqref{A1}.
In the case under consideration we have $m=1,\dots, d$ matrices $X^m\in su(N)$. For each of them \eqref{A2} remains invariant under \eqref{A1}. 

We consider the unbounded region, $\sum_m \operatorname{Tr} X^{m\dag }X^m>C^2$. This region decomposes into subsets where at least for one $m$, say $\widetilde{m}$, 
\begin{equation}\label{A3}
 \operatorname{Tr} X^{\widetilde{m}\dag }X^{\widetilde{m}}>\frac{C^2}{d}.
\end{equation}
Hence, from \eqref{A2}, 
\begin{equation}\label{A4}
\sum_I[(b^{\widetilde{m}}_{ij})^2+2N\vert Z_{ij}^{\widetilde{m}}\vert^2]>\frac{NC^2}{d}.
\end{equation}
We perform now a gauge transformation which diagonalize $X^{\widetilde{m}}$. Then, after \eqref{A1}, we have 
\begin{equation}\label{A5}
\sum_I(b^{'\widetilde{m}}_{ij})^2>\frac{NC^2}{d}.
\end{equation}
 
In order to simplify the notation, from here on we do not use the prime for the new components.
Since there are $\frac{N(N-1)}{2}(i,j)$ sectors and taking into consideration \eqref{A5}, for at least one sector $(i,j)$, say $(1,N)$, we must have, 
 \begin{equation}\label{A6}
(b^{\widetilde{m}}_{1N})^2>\frac{2C^2}{d(N-1)}.
\end{equation}
 Note that \eqref{A6} implies
 \begin{equation}\label{A7}
\vert b^{\widetilde{m}}_{1i}\vert+\vert b^{\widetilde{m}}_{iN}\vert \ge \vert b^{\widetilde{m}}_{1N}\vert > \left[\frac{2C^2}{d(N-1)}\right]^{\frac{1}{2}},\quad i=2,\dots,N-1.
\end{equation}
Therefore at least $(N-1)$ pairs $(i,j)\in I$ satisfy $(b^{\widetilde{m}}_{ij})^2>\frac{C^2}{2d(N-1)}.$

%%%%%%%%%%%%%%%%%%%%%%%%%%%%%%%%%%%%%%%%%%%%%%%%%%%%%%%

\section{The measure of \texorpdfstring{$K$}{}}\label{D}

In the previous appendices we considered the measure of the set $K\cap \mathcal{N}$,
\begin{equation}K\cap\mathcal{N}\equiv \{x\in k: \vert a_i-a_j\vert >\epsilon\quad \textrm{for all}\quad i,j,i<j\}.\end{equation}
Now we invoke these results and show that also the Lebesgue measure of $K$ is finite.

\subsection{Case \texorpdfstring{$su(2)$}{}}

Consider $X^m=\begin{pmatrix} ib^m& z^m\\-\overline{z}^m& -ib^m\end{pmatrix}$, for $m=1,\dots,d.$ We do not distinguish here $\widehat{X}$ from the other $u(2)$ matrices.  Under the gauge transformation \eqref{eqn II}, the traces $\operatorname{Tr} X^mX^{\dag m}$ are invariant for each $m$. Hence,
\begin{equation}P_2^m\equiv \frac{1}{2}\operatorname{Tr} X^mX^{m\dag}=(b^m)^2+z^m\overline{z}^m\end{equation}
is invariant.

We decompose $K$ into a finite number of subsets, determined by whether $P_2^m$ satisfies the condition $P_2^m\le \epsilon^2$ or the condition $P_2^m>\epsilon^2$, for $m=1,\dots,d.$ The subset $P_2^m\le \epsilon^2$ for all $m=1,\dots,d$ has finite measure, so we are left with other subsets for which at least for one $m$, say $\widetilde{m}$, $P_2^{\widehat{m}}>\epsilon.$ We have
\begin{equation} (b^{\widetilde{m}})^2+z^{\widetilde{m}}\overline{z}^{\widetilde{m}}>\epsilon^2.\end{equation}

We now perform a gauge transformation such that $X^{\widetilde{m}}$ becomes diagonal. The new $b^{\widetilde{m}}$ which we denote it with the same letter satisfies 
\begin{equation}(b^{\widetilde{m}})^2>\epsilon^2.\end{equation}
The two eigenvalues $b^{\widetilde{m}}$ and $-b^{\widetilde{m}}$ are such that \begin{equation}\vert b^{\widetilde{m}}- (-)b^{\widetilde{m}}\vert=2\vert b^{\widetilde{m}}\vert>2\vert\epsilon\vert.\end{equation} We may then apply the argument in Appendix~\ref{B.1}.  We conclude that if $d\ge 5$, $\mathrm{Vol}(K)$ is finite. Here $b^{\widetilde{m}}$ plays the role of the component $a$ in the notation of Appendix~\ref{B}.

%%%%%%%%%%%%%%%%%%%%%%%%%%%%%%%%%%%%%%%%%%%%%%%%%%%%%%%%%%%%%%%%%%%%%%%%%%%%%%%%%%%%%%%%%%%%%%%%%%%%%%%%%%%%%%%%%%%%%%%%%%%%%%%%%%%%%
\subsection{\bf Case \texorpdfstring{$su(3)$}{}}
We consider now the case $N=3$. Following appendix \ref{C}, there are at least two pairs, say $(1,3)$ and $(2,3)$ which satisfy
\begin{equation}\label{A8}
 (b^{\widetilde{m}}_{13})^2 > \frac{C^2}{4d}\quad \text{and} \quad (b^{\widetilde{m}}_{23})^2 > \frac{C^2}{4d}.
\end{equation}
Using \eqref{eqn 3}, we obtain $\vert\vert z_{13}\vert\vert^2<\frac{4dV_0}{C^2}$ and $\vert\vert z_{23}\vert\vert^2<\frac{4dV_0}{C^2}$.
The other pair may satisfy the same inequality or not. In the first case
\begin{equation}\label{A9}
 (b^{\widetilde{m}}_{ij})^2 > \frac{C^2}{4d} \quad \text{for all} \quad (i,j)\in I. 
\end{equation}
In the second case 
\begin{equation}\label{A10}
 (b^{\widetilde{m}}_{12})^2 \le \frac{C^2}{4d}. 
\end{equation}
In both cases $z_{ij}^{\widetilde{m}}=0$, for all $(i,j)\in I$. In the second case we may have 
\begin{equation}\label{A11}
(b^{m}_{12})^2+\vert z_{12}^{m}\vert^2\le\frac{C^2}{4d}
\end{equation}
 for all $m\in M$. That is, all the $(1,2)$ sector is bounded, the $(1,2)$ sector has then finite measure, or for some $\widehat{m}$
 \begin{equation}\label{A12}
h^2=(b^{\widehat{m}}_{12})^2+\vert z_{12}^{\widehat{m}}\vert^2>\frac{C^2}{4d}
\end{equation}
 where $h\ge 0$. 

We now perform a gauge transformation
 \begin{equation}
     \label{A13}U=\begin{pmatrix} A& C & 0\\-\overline{C} & +\overline{A} & 0\\ 0 &0 & 1 \end{pmatrix}.
 \end{equation}
Here $A\overline{A}+C\overline{C}=1$, hence $\det(U)=1$.
Then \begin{equation}\label{A14}
     b_{12}^{'\widehat{m}}=b_{12}^{\widehat{m}}\cdot(A\overline{A}-C\overline{C})+2iz_{12}^{\widehat{m}}\overline{C}\overline{A}-2i\overline{z}_{12}^{\widehat{m}}CA.
 \end{equation} 
 We choose $C=\overline{A}u$, with  $u=\frac{-i(b_{12}^{\widehat{m}})\mp h}{2\overline{z}_{12}^{\widehat{m}}}$, which yields
  \begin{equation}
     \label{A15}
     b_{12}^{'\widehat{m}}=h,\quad z_{12}^{'\widehat{m}}=0.
 \end{equation}
 Under this gauge transformation
  \begin{equation} \label{A16}
    \begin{pmatrix} z_{13}^{'\widetilde{m}}\\z_{23}^{'\widetilde{m}} \end{pmatrix}=  \begin{pmatrix} A& C \\-\overline{C} & -\overline{A} \end{pmatrix}\begin{pmatrix} z_{13}^{\widetilde{m}}\\z_{23}^{\widetilde{m}} \end{pmatrix},
 \end{equation}
 hence $z_{13}^{'\widetilde{m}}=z_{23}^{'\widetilde{m}}=0.$ We then have, from \eqref{A2} and \eqref{A4}, 
 \begin{equation}\label{A17}
(b^{'\widetilde{m}}_{12})^2+\vert z_{12}^{'\widetilde{m}}\vert^2+(b^{'\widetilde{m}}_{13})^2 +(b^{'\widetilde{m}}_{23})^2> \frac{3C^2}{d}.
\end{equation}
But 
 \begin{equation}\label{A18}
(b^{'\widetilde{m}}_{12})^2+\vert z_{12}^{'\widetilde{m}}\vert^2=(b_{12}^{\widetilde{m}})^2\le\frac{C^2}{4d}
\end{equation}
is invariant under the gauge transformation generated by \eqref{A13}. We thus have, from \eqref{A17} and \eqref{A18},
 \begin{equation}\label{A19}
(b^{'\widetilde{m}}_{13})^2+(b_{23}^{'\widetilde{m}})^2>\frac{11C^2}{4d}.
\end{equation}
This implies (from the argument in \eqref{A6}, \eqref{A7} and \eqref{A18}) that
 \begin{equation}\label{A20}
(b^{'\widetilde{m}}_{13})^2>\frac{C^2}{4d},\quad (b_{23}^{'\widetilde{m}})^2>\frac{C^2}{4d}
\end{equation}
and from \eqref{A8} and the  matrix $U$ we are considering \begin{equation}\label{73b}\vert\vert z_{13}^{'}\vert\vert^2+\vert\vert z_{23}^{'}\vert\vert^2<\frac{8dV_0}{C^2}\to \vert\vert z_{12}^{'}\vert\vert^2<\frac{V_0}{\epsilon^2},\end{equation}
for large enough $\epsilon$, proportional to $C$.
Consequently, in $su(3)$, the unbounded region $\sum_m \mathrm{Tr} X^{m\dag}X^m>C^2$ is the union of subsets. In each one of them there exists two unbounded sectors $(i,j)$, which by a gauge transformation satisfy $(b^{\widetilde{m}}_{ij})^2>\frac{C^2}{4d}, z_{ij}^{
\widetilde{m}}=0, (i,j)\in I$ for some $\widetilde{m}\in M$. The third $(k,l)$ sector either satisfies 
\begin{equation}\label{A21}
[(b_{kl}^m)^2+\vert z_{kl}^m\vert^2]\le \frac{C^2}{4d}\end{equation} for all $m\in M$ (\emph{i.e.}  it is bounded) or there exists $\widehat{m}$ such that 
\begin{equation}\label{A22}
(b_{kl}^{\widehat{m}})^2> \frac{C^2}{4d},\quad z_{kl}^{\widehat{m}}=0.\end{equation}
%From \eqref{eqn 7} and \eqref{73b} 
%\begin{equation} \vert\vert z_{kl}\vert\vert^2<\frac{4d\widehat{V}_0}{C^2}.\end{equation}
%Hence, given $X^m\in su(3)$ where $m \in M$, there always exists a gauge transformation such that each $(i,j)\in I$ sector is either bounded for all $m$; or there exists $\widetilde{m}$ which depends on $(i,j)$ for which $(b_{ij}^{\widetilde{m}})^2>\epsilon^2,\quad z_{ij}^{\widetilde{m}}=0,$  $ \epsilon^2=\frac{C^2}{4d}$ is arbitrarily large. 
In all cases, we have \begin{equation}\label {eqn 76}\vert\vert z_{ij}\vert\vert^2<\frac{V_0}{\epsilon^2}\end{equation} for all $(i,j)\in I$, and large enough $\epsilon^2$.

We now consider the expression of the potential $V_{\mathrm{B}}$. We do not distinguish any diagonal matrix as in Appendix~\ref{B.2}. The quadratic terms on $b_{ij}^{\widetilde{m}}$ corresponding to the $u(2)$ sector $(i,j)$, are 
\begin{equation}
v_{ij}=\vert\vert b_{ij}\vert\vert^2\vert\vert z_{ij}\vert\vert^2-(b_{ij}\cdot z_{ij})^2.
\end{equation}
More explicitly we have
\begin{equation}\label{eqn 7.2.2}
v_{ij}=\vert b_{ij}^{\widetilde{m}}\vert^2 \vert\vert z_{ij}\vert\vert^2+\sum_{M\setminus\widetilde{m}}\vert b_{ij}^{m}\vert^2 \sum_{M\setminus\widetilde{m}}\vert z_{ij}^{n}\vert^2-\sum_{M\setminus\widetilde{m}}\vert b_{ij}^{m} z_{ij}^{m}\vert^2
\end{equation}
and we obtain an analogue expression to the ones in Appendix~\ref{B.2}. Although there are linear terms on the diagonal components in the potential, the bound \eqref{eqn 76} allows to show that $\vert\vert z_{ij}\vert\vert^2<\frac{V_0}{\vert b_{ij}^{\widetilde{m}}\vert^2}$, the whole argument of finite volume follows identical steps. When a $u(2)$ sector is bounded, it is always possible to eliminate it from the expression of $V_{\mathrm{B}}$ and the calculations for the unbounded sectors, restricted by the bounded one, are as in Appendix~\ref{B.2}. The measure of $K$ for the $su(3)$ algebra is finite subject to the same conditions as before. 

%%%%%%%%%%%%%%%%%%%%%%%%%%%%%%%%%%%%%%%%%%%%%%%%%%%%%%%%%%%%%%%%%%%%%%%%%%%%%%%

%%%%%%%%%%%%%%%%%%%%%%%%%%%%%%%%%%%%%%%%%%%%%%%%%%%%%%%%%%%%%%%%%%%%%%%%%%%%%%%%%%%%%%%%%%%%%%%%%%%%%%%%%%%%%%%%%%%%%%
\subsection{Case \texorpdfstring{$su(N)$}{}}
Let  $Y_N^m\in u(N), m\in M$. Under a gauge transformation 
\begin{equation}\label{eqn 7.3.1}
    Y_N^m\to U^{-1}Y_N^mU,\quad m\in M,\quad U\in su(N),
\end{equation}
the traces $\mathrm{Tr} Y_M^m$ and $\mathrm{Tr} Y^{m\dag}_NY^m_N$ remain invariant. Consequently,
\begin{equation}\label{eqn 7.3.2}
   P^m\equiv \sum_I(\vert b_{ij}^m\vert^2+2N\vert z_{ij}^m\vert^2), \quad m\in M,
\end{equation}
remains also invariant under \eqref{eqn 7.3.1}.

We decompose $K$ into a finite number of subsets. the subset for which 
\begin{equation}\sum_m P^m\le C^2\end{equation}
for arbitrary $C>0$, has all variables $b_{ij}^m,z_{ij}^m$ bounded. On the complement, $\sum_m P^m> C^2$, there always exists $\widetilde{m}$ for which $P^{\widetilde{m}}>\frac{C^2}{9}$. That is, in all subsets of the complement at least for one index the condition is satisfied. $Y^{\widetilde{m}}_N$ can be diagonalized by a gauge transformation \eqref{eqn 7.3.1}. From \eqref{eqn 7.3.2} we obtain 
\begin{equation}\label{eqn 7.3.3}
   \sum_I\vert b_{ij}^{\widetilde{m}}\vert^2>\frac{C^2}{9}, \quad  z_{ij}^{\widetilde{m}}=0.
\end{equation}

There are at least $(N-1)$ $ b_{ij}^{\widetilde{m}}$
satisfying
\begin{equation}\label{eqn 7.3.4}
  \vert b_{ij}^{\widetilde{m}}\vert^2>\frac{C^2}{36h}, \quad  h=\frac{1}{2}N(N-1).
\end{equation} Then
\begin{equation} \vert\vert z_{ij}\vert\vert^2<\frac{36hV_0}{C^2}.
\end{equation} 
In fact, at least one, say $b_{12}^{\widetilde{m}}$, must satisfy 
$\vert b_{12}^{\widetilde{m}}\vert^2>\frac{C^2}{9h}.$
Then 
\begin{equation}\label{eqn 7.3.5}
 \vert b_{1i}^{\widetilde{m}}\vert +\vert b_{2i}^{\widetilde{m}}\vert>\vert b_{12}^{\widetilde{m}}\vert
 >\left(\frac{C^2}{9h}\right)^{1/2}, \quad  \textrm{for all}\quad  i=3,\dots N.
\end{equation}
This implies \eqref{eqn 7.3.4}. 

Set \begin{equation}J\equiv \{(i,j): i<j \quad 
\textrm{and}\quad b_{ij}^{\widetilde{m}} \text{ satisfy \eqref{eqn 7.3.4}}\},\end{equation} which contains $N-1$ pairs. % $\overline{J}=I\setminus J$ satisfies 
%\begin{equation}\label{eqn 7.3.5}
% \vert b_{ij}^{\widetilde{m}}\vert <(\frac{C^2}{36h}), \quad z_{ij}^{\widetilde{m}}=0.
%\end{equation}
In order to simplify the argumentation, consider the case $j=N$\begin{equation}J=\{(i,N):i=1,\dots,N-1\},\end{equation} so its complement becomes \begin{equation}\overline{J}=\{(i,j):i<j,\quad i,j\in [1,\dots,N-1]\}.\end{equation} In this case, 
\begin{equation}U=\begin{pmatrix} U_1& 0\\ 0& 1 \end{pmatrix}, \quad U_1\in su(N-1).\end{equation}
Under this gauge transformation $b_{NN}^{\widetilde{m}}$ and $\sum_i^{N-1}\vert z_{iN}^{\widetilde{m}}\vert^2$ remain invariant for each $m$. Hence, defining $Y_{N-1}^m\in u(N-1)$, where we have excluded the $N^{th}$ row and column of $Y_N^m$, the traces $\operatorname{Tr} Y_{N-1}^{m}$ and $\operatorname{Tr} Y_{N-1}^{m\dag}Y_{N-1}^m$ are invariant and so is 
\begin{equation}\label{eqn 7.3.5}
 Q^m= \sum_{\overline{J}}(\vert b_{ij}^m\vert^2 +2(N-1)\vert z_{ij}^m\vert^2). 
\end{equation}

Moreover, the action of $U_1$ preserves the condition \eqref{eqn 7.3.4}, which in this case is 
\begin{equation}\label{eqn 7.3.7a}
  \vert b_{iN}^{\widetilde{m}}\vert^2>\frac{C^2}{36h},\quad  z_{iN}^{\widetilde{m}}=0, \quad \vert\vert z_{iN}\vert\vert^2<\frac{36hV_0}{C^2}.
\end{equation}
We have then reduced the $u(N)$ case to the $u(N-1)$ case, satisfying \eqref{eqn 7.3.7a}. Furthermore, since in the case of $u(3)$ we have shown that each $u(2)$ sector is either bounded or there exists an index $\widetilde{m}$  for each sector $(i,j)$ satisfying \eqref{eqn 76}, we conclude that the statement of the following lemma is valid.
\begin{lemma}
For $m\in M$, let $Y^m\in u(N)$. Let $\mathcal{C}>0$ be  constant. Then, each $u(2)$ sector, $b_{ij}^m,z_{ij}^m$, is either bounded or there exists an index $\widetilde{m}$, depending on the sector, for which
\begin{equation}\label{eqn 7.3.7}
  \vert b_{ij}^{\widetilde{m}}\vert>\mathcal{C},\quad  z_{ij}^{\widetilde{m}}=0.
\end{equation}
\end{lemma}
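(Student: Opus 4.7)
The plan is to prove the lemma by induction on $N$, relying on the explicit constructions that were carried out in the two preceding subsections. The cases $N=2$ and $N=3$ serve as the base: in each, either $\sum_m P^m$ is bounded (so every $u(2)$ sector is bounded) or one diagonalises a suitable $Y^{\widetilde{m}}$ and, after possibly a further residual transformation of the type \eqref{A13}, each $u(2)$ sector is shown to meet the dichotomy. What remains is to package these steps into a clean recursion on the rank.

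For the inductive step, assume the statement for all $u(N')$ with $2\le N'<N$ and fix $\mathcal{C}>0$. Introduce an auxiliary constant $C$ proportional to $\mathcal{C}$ (to be tuned at the end) and decompose $K$ according to whether $\sum_m P^m\le C^2$ or not. On the first piece all $u(2)$ sectors are manifestly bounded. On the complement choose $\widetilde{m}$ with $P^{\widetilde{m}}>C^2/9$ and diagonalise $Y^{\widetilde{m}}$ via \eqref{eqn 7.3.1}. Arguing as in \eqref{eqn 7.3.3}–\eqref{eqn 7.3.5}, at least $N-1$ sectors, which after relabelling we take to be $\{(i,N):i=1,\dots,N-1\}$, satisfy $\vert b_{iN}^{\widetilde{m}}\vert^2>C^2/(36h)$ together with $z_{iN}^{\widetilde{m}}=0$, i.e.\ condition \eqref{eqn 7.3.7a}. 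Choosing $C$ large enough with respect to $\mathcal{C}$ yields the conclusion of the lemma for these sectors directly, with common index $\widetilde{m}$.

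The remaining $(i,j)$ pairs with $i,j\le N-1$ form the sub-block $Y_{N-1}^m\in u(N-1)$. These are acted on by the residual gauge group of block-diagonal transformations $U=\operatorname{diag}(U_1,1)$ with $U_1\in SU(N-1)$, which preserves $\vert b_{iN}^{\widetilde{m}}\vert$ and $z_{iN}^{\widetilde{m}}$ for each $m$ (in particular preserving the condition \eqref{eqn 7.3.7a} just established). Restricted to the block, this residual action is the usual conjugation action of $SU(N-1)$ and the invariants $Q^m$ of \eqref{eqn 7.3.5} are exactly the $u(N-1)$ analogues of $P^m$. The inductive hypothesis then applies verbatim to $\{Y_{N-1}^m\}$, producing, for each $u(2)$ sub-sector $(i,j)\in\overline{J}$, either boundedness or an index $\widetilde{m}'$ (depending on the sector) satisfying $\vert b_{ij}^{\widetilde{m}'}\vert>\mathcal{C}$ and $z_{ij}^{\widetilde{m}'}=0$. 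Together with the bound \eqref{eqn 76} on $\|z_{ij}\|$ coming from $V_{\mathrm{B}}<V_0$, this covers every $u(2)$ sector of $Y^m$.

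The main obstacle is a bookkeeping one: checking that the two successive gauge transformations are compatible and that the choices of thresholds remain consistent under the recursion. Compatibility is immediate from the block form $U=\operatorname{diag}(U_1,1)$, which leaves the last row and column fixed; consistency of the thresholds follows from the freedom to rescale $C$ by an $N$-dependent factor when applying the inductive hypothesis, since the combinatorial constant $h=N(N-1)/2$ grows only polynomially with $N$. No new analytic estimate is needed beyond those already established for $u(2)$ and $u(3)$, and the lemma follows.
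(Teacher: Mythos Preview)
Your proposal is correct and follows essentially the same approach as the paper: an induction on $N$ with base cases $N=2,3$, diagonalisation of a suitable $Y^{\widetilde m}$ to secure the $(N-1)$ sectors $\{(i,N)\}$, and reduction of the remaining sectors to a $u(N-1)$ problem via block-diagonal gauge transformations $U=\operatorname{diag}(U_1,1)$. The paper states the lemma precisely as the conclusion of that recursive construction, so your write-up is a faithful (and slightly more explicit, in its discussion of compatibility and threshold bookkeeping) rendering of the intended argument.
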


We now consider the Lebesgue measure of $K$ for the potential, valued on a $su(N)$ algebra. The bounded $u(2)$ sectors are dismissed from the potential leaving only those which are unbounded. For each one of these, there is only one quadratic term on $b_{ij}^{\widetilde{m}}$ as in \eqref{eqn 7.2.2} in the potential, although there is also linear terms on it, we obtain
\begin{equation}\label{eqn 81}\vert\vert z_{ij}\vert\vert^2<\frac{V_0}{\vert b_{ij}^{\widetilde{m}}\vert^2}<\frac{V_0}{\epsilon^2}\end{equation} for large enough $\epsilon^2$. We then reduce the evaluation of the subsets of $K$ to the particular cases of section B. This is enough to show that $\operatorname {Vol}(K)<\infty$, under
the hypothesis of Lemma~\ref{lemfinitevolK}.

%%%%%%%%%%%%%%%%%%%%%%%%%%%%%%%%%%%%%%%%%%%%%%%%%%%%%%%%%%%%%%%%%%%%%%%%%%%%%%%%%%%%%%%%%%

\end{document}